\renewcommand{\orcidID}[1]{\href{https://orcid.org/#1}{\includegraphics[scale=.03]{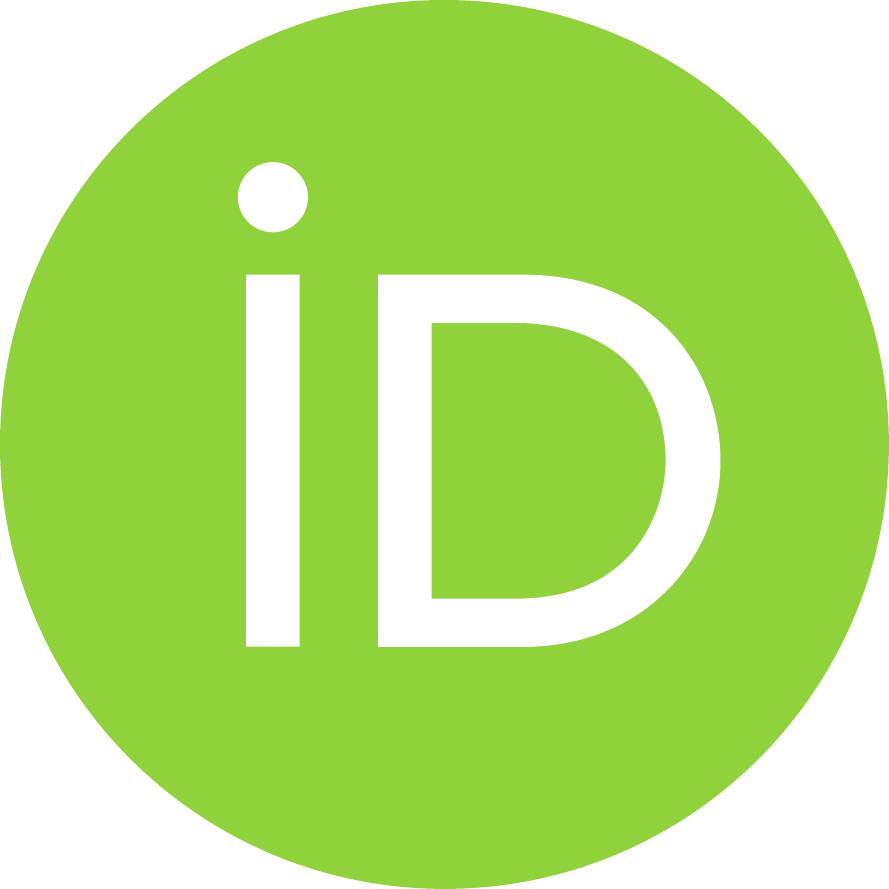}}} 
\Crefname{observation}{Observation}{Observations}
\Crefname{proposition}{Proposition}{Propositions}
\Crefname{claim}{Claim}{Claims}
\Crefname{property}{Property}{Properties}
\Crefname{enumi}{Property}{Properties}
\newenvironment{sketch}{\noindent\textit{Sketch of proof:}}{\hfill {\footnotesize $\square$}\bigskip}
\definecolor{MyBlue}{rgb}{0.022,0.263,0.394}
\renewcommand{\emph}[1]{{\color{MyBlue}{\em #1}}\xspace}
\title{$st$-Orientations with Few Transitive Edges\thanks{Work partially supported by: (i) MIUR, grant 20174LF3T8 AHeAD: efficient Algorithms for HArnessing networked Data", (ii) Dipartimento di Ingegneria, Universita degli Studi di Perugia, grant RICBA21LG: Algoritmi, modelli e sistemi per la rappresentazione visuale di reti.}}
\author{Carla Binucci\inst{1}\orcidID{0000-0002-5320-9110}$^c$ \and
Walter Didimo\inst{1}\orcidID{0000-0002-4379-6059} \and Maurizio Patrignani\inst{2}\orcidID{0000-0001-9806-7411}}
\institute{}
\institute{Universit\`a degli Studi di Perugia, Italy\\
\email{\{carla.binucci,walter.didimo\}@unipg.it}
\and  Roma Tre University, Rome, Italy\\
\email{maurizio.patrignani@uniroma3.it}
}
\begin{document}

\maketitle

\begin{abstract}
The problem of orienting the edges of an undirected graph such that the resulting digraph is acyclic and has a single source $s$ and a single sink $t$ has a long tradition in graph theory and is central to many graph drawing algorithms. Such an orientation is called an $st$-orientation. 
We address the problem of computing $st$-orientations of undirected graphs with the minimum number of transitive edges. We prove that the problem is NP-hard in the general case. For planar graphs we describe an ILP model that is fast in practice. We experimentally show that optimum solutions dramatically reduce the number of transitive edges with respect to unconstrained $st$-orientations computed via classical $st$-numbering algorithms. Moreover, focusing on popular graph drawing algorithms that apply an $st$-orientation as a preliminary step, we show that reducing the number of transitive edges leads to drawings that are much more compact.  

\end{abstract}



\section{Introduction}\label{se:introduction}
The problem of orienting the edges of an undirected graph in such a way that the resulting digraph satisfies specific properties has a long tradition in graph theory and represents a preliminary step of several graph drawing algorithms.
For example, Eulerian orientations require that each vertex gets equal in-degree and out-degree; they are used to compute 3D orthogonal graph drawings~\cite{DBLP:journals/dam/EadesSW00} and right-angle-crossing drawings~\cite{DBLP:journals/jgaa/AngeliniCDFBKS11}. Acyclic orientations require that the resulting digraph does not contain directed cycles (i.e., it is a DAG); they can be used as a preliminary step to compute hierarchical and upward drawings that nicely represent an undirected graph, or a partially directed graph, so that all its edges monotonically flow in the same direction~\cite{DBLP:journals/cj/BinucciD16,DBLP:journals/tcs/BinucciDP14,DBLP:reference/algo/Didimo16,DBLP:journals/jgaa/EiglspergerKE03,DBLP:journals/jgaa/FratiKPTW14,DBLP:reference/crc/HealyN13}. 

Specific types of acyclic orientations that are central to many graph algorithms and applications are the so called \emph{$st$-orientations}, also known as \emph{bipolar orientations}~\cite{DBLP:journals/dcg/RosenstiehlT86}, whose resulting digraphs have a single source~$s$ and a single sink~$t$. It is well known that an undirected graph~$G$ with prescribed vertices~$s$ and~$t$ admits an $st$-orientation if and only if $G$ with the addition of the edge $(s,t)$ (if not already present) is biconnected. The digraph resulting from an $st$-orientation is also called an \emph{$st$-graph}.
An $st$-orientation can be computed in linear time via an $st$-numbering (or $st$-ordering) of the vertices of $G$~\cite{DBLP:journals/tcs/EvenT76,DBLP:conf/esa/Brandes02}, by orienting each edge from the end-vertex with smaller number to the end-vertex with larger number~\cite{DBLP:conf/esa/Brandes02}. In particular, if $G$ is planar, a \emph{planar $st$-orientation} of $G$ additionally requires that $s$ and $t$ belong to the external face in some planar embedding of the graph. Planar $st$-orientations were originally introduced in the context of an early planarity testing algorithm~\cite{lec-67}, and are largely used in graph drawing to compute different types of layouts, including visibility representations, polyline drawings, dominance drawings, and orthogonal drawings (refer to~\cite{DBLP:books/ph/BattistaETT99,DBLP:conf/dagstuhl/1999dg}). Planar $st$-orientations and related graph layout algorithms are at the heart of several graph drawing libraries and software (see, e.g.,~\cite{DBLP:reference/crc/ChimaniGJKKM13,DBLP:reference/crc/BattistaD13,DBLP:books/sp/04/WieseE004,DBLP:books/sp/Juenger04}). Algorithms that compute $st$-orientations with specific characteristics (such as bounds on the length of the longest path) are also proposed and experimented in the context of visibility and orthogonal drawings~\cite{DBLP:journals/tcs/PapamanthouT08,DBLP:journals/jgaa/PapamanthouT10}.

\begin{figure}[tb]
	\centering
	\subfigure[8 transitive edges]{
		\includegraphics[width=0.4\textwidth]{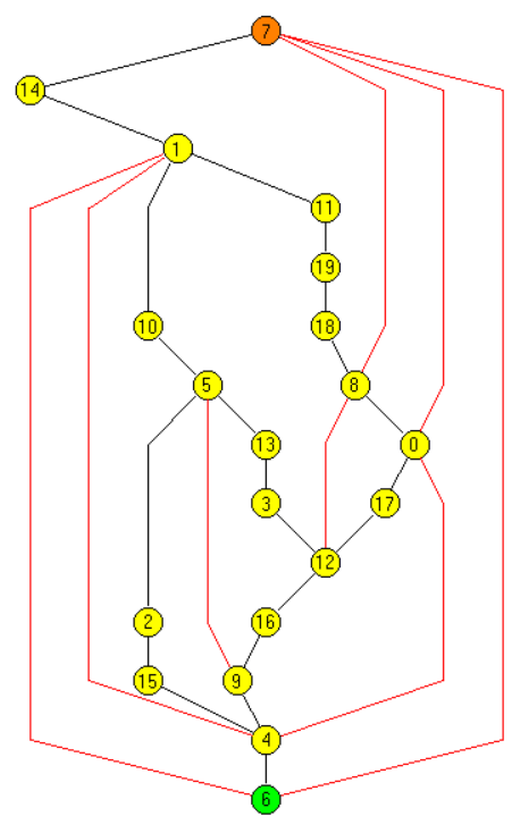}
		\label{fi:intro-a}
	}
	\hfil
	\subfigure[4 transitive edges]{
		\includegraphics[width=0.41\textwidth]{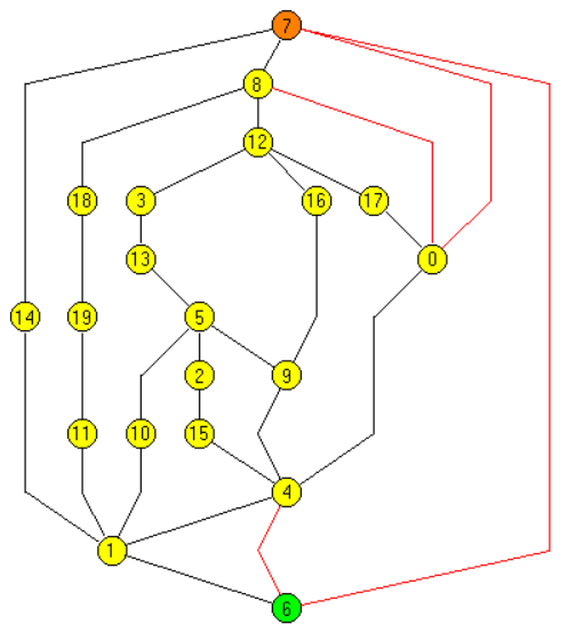}
		\label{fi:intro-b}
	}
	\caption{Two polyline drawings of the same plane graph, computed using two different $st$-orientations, with $s=6$ and $t=7$; transitive edges are in red. (a) An unconstrained $st$-orientation with $8$ transitive edges, computed through an $st$-numbering; (b) An $st$-orientation with the minimum number (four) of transitive edges; the resulting drawing is more compact and has shorter edges. \label{fi:intro}}
\end{figure}

\medskip
Our paper focuses on the computation of $st$-orientations with a specific property, namely we address the following problem: ``Given an undirected graph $G$ and two prescribed vertices $s$ and $t$ for which $G \cup (s,t)$ is biconnected, compute an $st$-orientation of $G$ such that the resulting $st$-graph $G'$ has the minimum number of transitive edges (possibly none)''. We recall that an edge $(u,v)$ of a digraph $G'$ is \emph{transitive} if there exists a directed path from $u$ to $v$ in $G' \setminus (u,v)$. An $st$-orientation is \emph{non-transitive} if the resulting digraph has no transitive edges; $st$-graphs with no transitive edges are also known as \emph{transitively reduced $st$-graphs}~\cite{DBLP:books/ph/BattistaETT99,DBLP:journals/jgaa/EppsteinS13}, \emph{bipolar posets}~\cite{DBLP:journals/corr/abs-2105-06955}, or \emph{Hasse diagrams of lattices}~\cite{p-plpg-76,DBLP:journals/tcs/BattistaT88}. The problem we study, besides being of theoretical interest, has several practical motivations in graph drawing. We mention some of them: 
\begin{itemize}

	\item Planar $st$-oriented graphs without transitive edges admit compact dominance drawings with straight-line edges, a type of upward drawings that can be computed in linear time with very simple algorithms~\cite{DBLP:journals/dcg/BattistaTT92}; when a transitive edge is present, one can temporarily subdivide it with a dummy vertex, which will correspond to an edge bend in the final layout. Hence, having few transitive edges helps to reduce bends in a dominance drawing. 
	
	\item As previously mentioned, many layout algorithms for undirected planar graphs rely on a preliminary computation of an $st$-orientation of the input graph. We preliminary observed that reducing the number of transitive edges in such an orientation has typically a positive impact on the readability of the layout. Indeed, transitive edges often result in long curves; avoiding them produces faces where the lengths of the left and right paths are more balanced and leads to more compact drawings (see~\cref{fi:intro}). 
	
	
	\item Algorithms for computing upward confluent drawings of transitively reduced DAGs are studied in~\cite{DBLP:journals/jgaa/EppsteinS13}. Confluent drawings exploit edge bundling to create ``planar'' layouts of non-planar graphs, without introducing ambiguity~\cite{DBLP:journals/jgaa/DickersonEGM05}. These algorithms can be applied to draw undirected graphs that have been previously $st$-oriented without transitive edges when possible.  
\end{itemize}

We also mention algorithms that compute two-page book embeddings of two-terminal series-parallel digraphs, which either assume the absence of transitive edges~\cite{DBLP:conf/gd/AlzohairiR96} or which are easier to implement if transitive edges are not present~\cite{DBLP:journals/algorithmica/GiacomoDLW06}.  

\paragraph{Contribution.} In this paper we first prove that deciding whether a graph admits an $st$-orientation without transitive edges is NP-complete. This is in contrast with the tractability of a problem that is at the opposite of ours, namely, deciding whether an undirected graph has an orientation such that the resulting digraph is its own transitive closure; this problem can be solved in linear time~\cite{DBLP:journals/dm/McConnellS99}.

From a practical point of view, we provide an Integer Linear Programming (ILP) model for planar graphs, whose solution is an $st$-orientation with the minimum number of transitive edges. In our setting, $s$ and $t$ are two prescribed vertices that belong to the same face of the input graph in at least one of its planar embeddings. We prove that the ILP model works very fast in practice. Popular solvers such as CPLEX can find a solution in few seconds for graphs up to $1000$ vertices and the resulting $st$-orientations save on average $35\%$ of transitive edges (with improvements larger than $80\%$ on some instances) with respect to applying classical unconstrained $st$-orientation algorithms. 
Moreover, focusing on popular graph drawing algorithms that apply an $st$-orientation as a preliminary step, we show that reducing the number of transitive edges leads to drawings that are much more compact.

For space restrictions, some details are omitted. Full proofs and additional material can be found in~\cref{se:app}.

\section{NP-Completeness of the General Problem}\label{se:hardness}

We prove that given an undirected graph $G=(V,E)$ and two vertices $s, t \in V$, it is NP-complete to decide whether there exists a non-transitive $st$-orientation of~$G$. We call this problem {\sc Non-Transitive st-Orientation (NTO)}. 
To prove the hardness of \textsc{NTO} we describe a reduction from the NP-complete problem {\sc Not-All-Equal 3SAT (NAE3SAT)}~\cite{s-csp-78}, where one has a collection of clauses, each composed of three literals out of a set $X$ of Boolean variables, and is asked to determine whether there exists a truth assignment to the variables in $X$ so that each clause has at least one \texttt{true} and one \texttt{false} literal. 

Starting from a \textsc{NAE3SAT} instance $\varphi$, we construct an instance $I_\varphi = \langle G, s,t \rangle$ of \textsc{NTO} such that $I_\varphi$ is a yes instance of \textsc{NAE3SAT} if and only if $\varphi$ is a yes instance of \textsc{NTO}.
Instance $I_\varphi$ has one variable gadget $V_x$ for each Boolean variable $x$ and one clause gadget $C_c$ for each clause $c$ of $\varphi$. By means of a split gadget, the truth value encoded by each variable gadget $V_x$ is transferred to all the clause gadgets containing either the direct literal $x$ or its negation $\overline{x}$.
Observe that the \textsc{NAE3SAT} instance is in general not ``planar'', in the sense that if you construct a graph where each variable $x$ and each clause $c$ is a vertex and there is an edge between $x$ and $c$ if and only if a literal of $x$ belongs to~$c$, then such a graph would be non-planar. The \textsc{NAE3SAT} problem on planar instances is, in fact, polynomial~\cite{m-pnp-88}. Hence, $G$ has to be assumed non-planar as well.


The main ingredient of the reduction is the \emph{fork gadget} (\cref{fi:gadget}), for which the following lemma holds (the proof is in \cref{sse:app-hardness}).

\begin{figure}[b]
	\centering
	\subfigure[]{
		\includegraphics[width=0.25\textwidth, page=1]{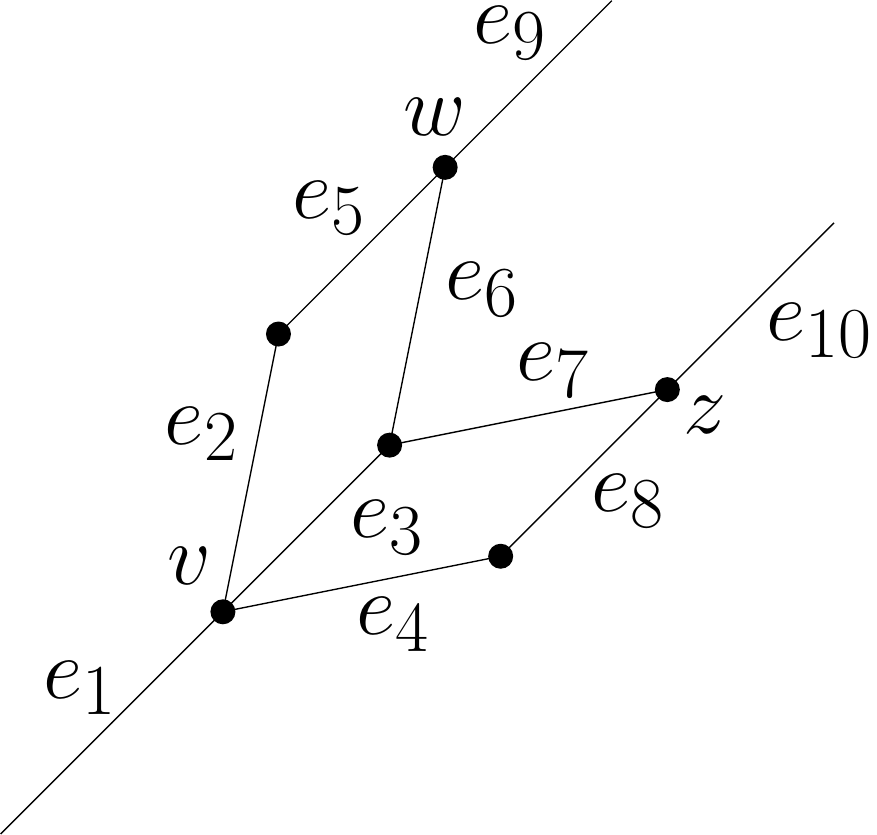}
		\label{fi:gadget-a}
	}
	\hfil
	\subfigure[]{
		\includegraphics[width=0.25\textwidth, page=2]{gadget}
		\label{fi:gadget-b}
	}
	\hfil
	\subfigure[]{
		\includegraphics[width=0.25\textwidth, page=3]{gadget}
		\label{fi:gadget-c}
	}
	\caption{(a) The fork gadget. (b)-(c) The two possible orientations of the fork gadget in a non-transitive st-orientation of the whole graph.}\label{fi:gadget} 
\end{figure}

\begin{restatable}{lemma}{leForkGadget}\label{le:fork-gadget}
Let $G$ be an undirected graph containing a fork gadget $F$ that does not contain the vertices $s$ or $t$. In any non-transitive st-orientation of $G$, the edges $e_9$ and $e_{10}$ of $F$ are oriented either both exiting $F$ or both entering $F$. They are oriented exiting $F$ if and only if edge $e_1$ is oriented entering $F$.  
\end{restatable}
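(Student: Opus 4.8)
The plan is to argue entirely inside the constant-size fork gadget $F$, exploiting three properties of an arbitrary non-transitive $st$-orientation $\mathcal{O}$ of $G$: (i) $\mathcal{O}$ is acyclic; (ii) since $s$ and $t$ lie outside $F$, \emph{every} vertex of $F$ has at least one incoming and at least one outgoing edge in $\mathcal{O}$; and (iii) for every edge $(u,v)$ of $F$ there is no other directed $u$--$v$ path in $\mathcal{O}$, in particular no such path using only edges of $F$. The first step is to name the internal vertices and edges of $F$ as in \cref{fi:gadget-a} and to record, for each vertex, the in/out-degree lower bounds forced by~(ii). Because the only edges of $F$ that can be incident to vertices of $G \setminus F$ are $e_1$, $e_9$, and $e_{10}$, a vertex of small degree has the orientation of its incident edges essentially pinned down: a degree-two vertex of $F$ must have one of its edges in and the other out, and combining this with acyclicity on the short cycles inside $F$ removes all but a few orientations of the internal edges.

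The core of the argument is then a case analysis on the orientation of $e_1$. Assume first that $e_1$ enters $F$. I would propagate forced choices edge by edge through $F$: at each step, at least one of the constraints (i)--(iii) rules out one of the two orientations of some still-free edge — either because the other orientation would make a vertex of $F$ a local source or local sink (contradicting (ii)), or because it would close a directed cycle (contradicting (i)), or because it would create, together with already-oriented edges, a directed path parallel to some edge of $F$ (contradicting (iii)). Carrying this chain of implications through $F$ yields a \emph{unique} orientation of all internal edges, and in it both $e_9$ and $e_{10}$ are directed out of $F$; this is exactly the configuration of \cref{fi:gadget-b}. The case in which $e_1$ exits $F$ is symmetric — one either invokes the natural up--down symmetry of the gadget or replays the same propagation — and it forces both $e_9$ and $e_{10}$ to enter $F$, as in \cref{fi:gadget-c}. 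Since $e_1$ must be oriented one way or the other and these are the only two surviving possibilities, we conclude simultaneously that $e_9$ and $e_{10}$ are always co-oriented (both out or both in) and that they exit $F$ if and only if $e_1$ enters $F$, which is the statement of the lemma.

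The step I expect to be the main obstacle is the purely combinatorial bookkeeping needed to make the propagation airtight for the specific wiring of $F$: one must check that the forced-orientation process never gets stuck while some internal edge is still free (so that the two cases are genuinely fully determined) and, conversely, that neither surviving case runs into a contradiction (which would make the lemma vacuously true and useless in the reduction). This is best organised as a short table listing the internal edges of $F$ in the order in which their orientations become forced, each annotated with the vertex or cycle and the constraint (i), (ii), or (iii) responsible. One further point deserves explicit mention: constraint (iii) must in principle be checked against \emph{all} directed $u$--$v$ paths, including those that leave and re-enter $F$; such a detour would have to use at least one of $e_1, e_9, e_{10}$ as an out-edge and another as an in-edge of $F$, and I would dispatch this possibility with a brief separate observation about the attachment structure of $F$ before confining the remainder of the analysis to paths internal to the gadget.
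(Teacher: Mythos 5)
Your overall strategy coincides with the paper's: fix the orientation of $e_1$, exploit the absence of internal sources and sinks, acyclicity, and non-transitivity to pin down the remaining edges of $F$, and invoke the gadget's symmetry for the opposite case. The genuine gap is that the entire substance of the lemma lies in the step you defer: the assertion that ``carrying this chain of implications through $F$ yields a unique orientation'' is exactly what has to be proved, and you give no argument for it. Moreover, the process is not, as you describe it, a deterministic edge-by-edge forcing in which some free edge is always resolved by a local constraint. After one fixes $e_1$ entering $F$ and argues (via the no-sink condition and the gadget's symmetry) that without loss of generality $e_9$ exits $F$, one must first establish the non-local fact that some directed path from $e_1$ to $e_9$ traverses $F$, then enumerate the three candidate paths $(e_1,e_4,e_8,e_7,e_6,e_9)$, $(e_1,e_3,e_6,e_9)$ and $(e_1,e_2,e_5,e_9)$, eliminate the first by the chord argument of \cref{ob:chord}, and show that the second and third each force the other (so both must hold) before the orientations of $e_4,e_8,e_7$ and finally $e_{10}$ become determined. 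A purely local propagation of the kind you describe stalls at the first vertex of degree greater than two, where no single edge is immediately forced; so what you flag as ``combinatorial bookkeeping'' is not bookkeeping but the proof itself, and as written the proposal does not contain it.

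Two of your side concerns can also be dispatched more cheaply than you suggest. To rule out a candidate orientation it suffices to exhibit \emph{one} internal directed path parallel to an edge of $F$: non-transitivity is a hypothesis on the whole orientation, so an internal witness already yields the contradiction, and you never need to reason about paths that leave and re-enter $F$. Likewise, verifying that the two surviving configurations are actually realizable (your worry that the lemma might be vacuous) is not part of this lemma at all; it is handled in the reverse direction of the reduction in the proof of \cref{th:hardness}, where an explicit non-transitive $st$-orientation is constructed from a satisfying assignment.
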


\begin{figure}[tb]
	\centering
	\subfigure[]{
		\includegraphics[width=0.3\textwidth, page=1]{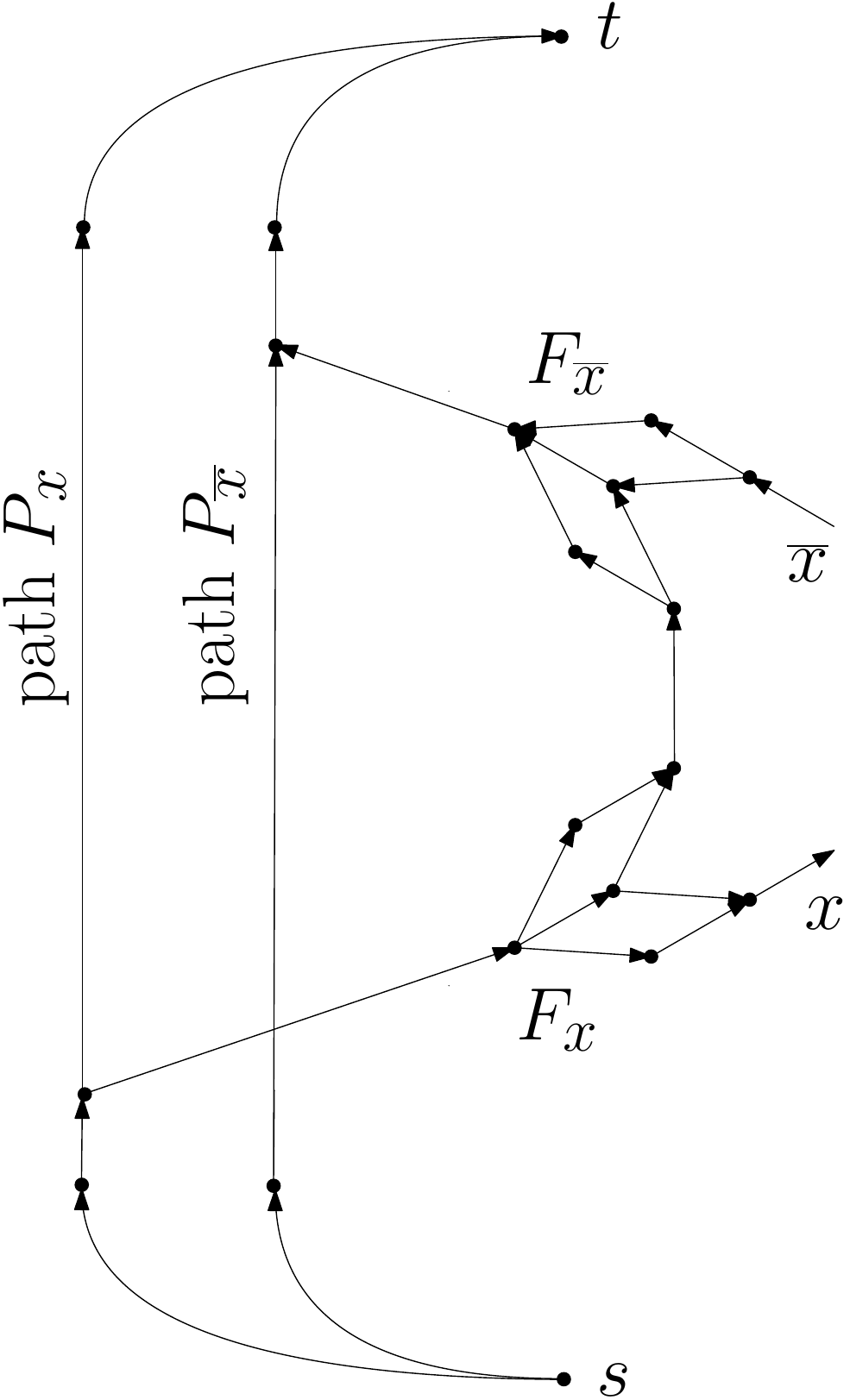}
		\label{fi:variable-gadget-a}
	}
	\hfil
	\subfigure[]{
		\includegraphics[width=0.3\textwidth, page=2]{variable-gadget}
		\label{fi:variable-gadget-b}
	}
	\caption{The variable gadget $V_x$ and its \texttt{true} (a) and \texttt{false} (b) orientations.}\label{fi:variable-gadget} 
\end{figure}

For each Boolean variable $x$ of $\phi$ we construct a \emph{variable gadget} $V_x$ by suitably combining two fork gadgets, denoted $F_x$ and $F_{\overline{x}}$, as follows (see \cref{fi:variable-gadget}). 
We introduce two paths $P_{x}$ and $P_{\overline{x}}$ of length four from $s$ to $t$. The edge $e_1$ of $F_x$ (of $F_{\overline{x}}$, respectively) is attached to the middle vertex of path $P_{x}$ (of path $P_{\overline{x}}$, respectively). Edge $e_{10}$ of $F_{\overline{x}}$ is identified with edge $e_9$ of $F_{x}$. The two edges $e_9$ of $F_{\overline{x}}$ and $e_{10}$ of $F_{x}$ are denoted $\overline{x}$ and $x$, respectively. We have the following lemma (see \cref{sse:app-hardness} for the proof).

\begin{restatable}{lemma}{leVariableGadget}\label{le:variable-gadget}
Let $G$ be an undirected graph containing a variable gadget $V_x$. In any non-transitive st-orientation of $G$ the two edges of $V_x$ denoted $x$ and $\overline{x}$ are one entering and one exiting $V_x$ or vice versa.
\end{restatable}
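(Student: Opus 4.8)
The plan is to obtain the statement by applying \cref{le:fork-gadget} to each of the two fork gadgets $F_x$ and $F_{\overline{x}}$ and then gluing the two conclusions through the single edge that the two forks share. First I would check that \cref{le:fork-gadget} is applicable: the fork $F_x$ meets the rest of $V_x$ only at the middle vertex of $P_x$ (where $e_1$ of $F_x$ is attached) and at the endpoints of $e_9(F_x)$ and of $e_{10}(F_x)=x$; since $s$ and $t$ are the endpoints of the paths $P_x,P_{\overline{x}}$ and the middle vertices of those paths are distinct from $s$ and $t$, the gadget $F_x$ contains neither $s$ nor $t$, and symmetrically for $F_{\overline{x}}$. (The paths $P_x,P_{\overline{x}}$ enter the argument only here, together with the easy observation that the degree-two inner vertices of each path adjacent to $s$ and to $t$ force the path to be oriented monotonically from $s$ to $t$; all transitivity constraints internal to a fork are already absorbed into \cref{le:fork-gadget}.)

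Next I would concentrate on the edge $g := e_9(F_x)=e_{10}(F_{\overline{x}})$. Since $F_x$ and $F_{\overline{x}}$ are disjoint except for $g$ and its endpoints, and $g$ is the edge $e_9$ of $F_x$ as well as the edge $e_{10}$ of $F_{\overline{x}}$, in any orientation of $G$ the edge $g$ exits exactly one of the two forks; that is, ``$g$ exits $F_x$'' holds if and only if ``$g$ exits $F_{\overline{x}}$'' does not. Applying \cref{le:fork-gadget} to $F_x$, with $g$ in the role of $e_9$, gives $g$ exits $F_x$ $\iff$ $e_1(F_x)$ enters $F_x$; applying it to $F_{\overline{x}}$, with $g$ in the role of $e_{10}$, gives $g$ exits $F_{\overline{x}}$ $\iff$ $e_1(F_{\overline{x}})$ enters $F_{\overline{x}}$. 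Chaining these equivalences with the mutual exclusion above yields that $e_1(F_x)$ enters $F_x$ if and only if $e_1(F_{\overline{x}})$ does \emph{not} enter $F_{\overline{x}}$.

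Finally I would transfer this to the edges $x=e_{10}(F_x)$ and $\overline{x}=e_9(F_{\overline{x}})$. Once more by \cref{le:fork-gadget}, $x$ exits $F_x$ $\iff$ $e_1(F_x)$ enters $F_x$, and $\overline{x}$ exits $F_{\overline{x}}$ $\iff$ $e_1(F_{\overline{x}})$ enters $F_{\overline{x}}$; and since the endpoints of $x$ and of $\overline{x}$ opposite to the forks lie outside $V_x$ (they attach to split gadgets in $G$), ``exits $F_x$'' coincides with ``exits $V_x$'' for $x$, and ``exits $F_{\overline{x}}$'' with ``exits $V_x$'' for $\overline{x}$. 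Combining with the previous paragraph, exactly one of $x$ and $\overline{x}$ exits $V_x$, i.e., one of the two edges is entering $V_x$ and the other is exiting it, which is the claim.

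I expect the only delicate point to be the bookkeeping around the shared edge: one must be careful that ``entering/exiting $F$'' is the right notion for $g$, which simultaneously plays the role of a fork edge ($e_9$) of $F_x$ and of a fork edge ($e_{10}$) of $F_{\overline{x}}$, and that it is precisely this choice — identifying an $e_9$ of one fork with an $e_{10}$ of the other, rather than two $e_9$'s or two $e_{10}$'s — that makes the two ``$e_1$ enters'' conditions complementary rather than equal. A secondary check is that $F_x$ and $F_{\overline{x}}$ really are vertex-disjoint except along $g$, so that the mutual exclusion ``$g$ exits exactly one fork'' genuinely holds.
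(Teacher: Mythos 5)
Your proposal is correct and takes essentially the same route as the paper's proof: both apply \cref{le:fork-gadget} to each of $F_x$ and $F_{\overline{x}}$ and propagate the forced orientation through the shared edge $e_9$ of $F_x$ $=$ $e_{10}$ of $F_{\overline{x}}$, concluding that exactly one of $x$ and $\overline{x}$ exits $V_x$. The only difference is presentational: the paper organizes the chain of biconditionals as a two-case analysis on the orientation of $e_1$ of $F_x$, whereas you state the same implications as equivalences combined with the mutual-exclusion observation on the shared edge.
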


By virtue of \cref{le:variable-gadget} we associate the \texttt{true} value of variable $x$ with the orientation of $V_x$ where edge $x$ is oriented exiting and edge $\overline{x}$ is oriented entering $V_x$ 
(see \cref{fi:variable-gadget-a}). 
We call such an orientation the \emph{\texttt{true} orientation of $V_x$}. Analogously, we associate the \texttt{false} value of variable $x$ with the orientation of $V_x$ where edge $x$ is oriented entering and edge $\overline{x}$ is oriented exiting $V_x$ (see \cref{fi:variable-gadget-b}). 
%
Observe that edge $x$ (edge $\overline{x}$, respectively) is oriented exiting $V_x$ when the literal $x$ (the literal $\overline{x}$, respectively) is \texttt{true}. Otherwise edge $x$ (edge $\overline{x}$, respectively) is oriented entering $V_x$.

The \emph{split gadget} $S_k$ is composed of a chain of $k-1$ fork gadgets $F_1, F_2, \dots F_{k-1}$, where, for $i=1,2,\dots,k-2$, the edge $e_9$ of $F_i$ is identified with the edge $e_1$ of~$F_{i+1}$. We call \emph{input edge of $S_k$} the edge denoted $e_1$ of $F_1$.
Also, we call \emph{output edges of $S_k$} the $k-1$ edges denoted $e_{10}$ of the fork gadgets $F_1, F_2, \dots F_{k-1}$ and the edge $e_9$ of $F_{k-1}$ (see \cref{fi:split-gadget}). The next lemma is immediate and we~omit~the~proof.

\begin{figure}[tb]
	\centering
		\includegraphics[width=0.8\textwidth, page=1]{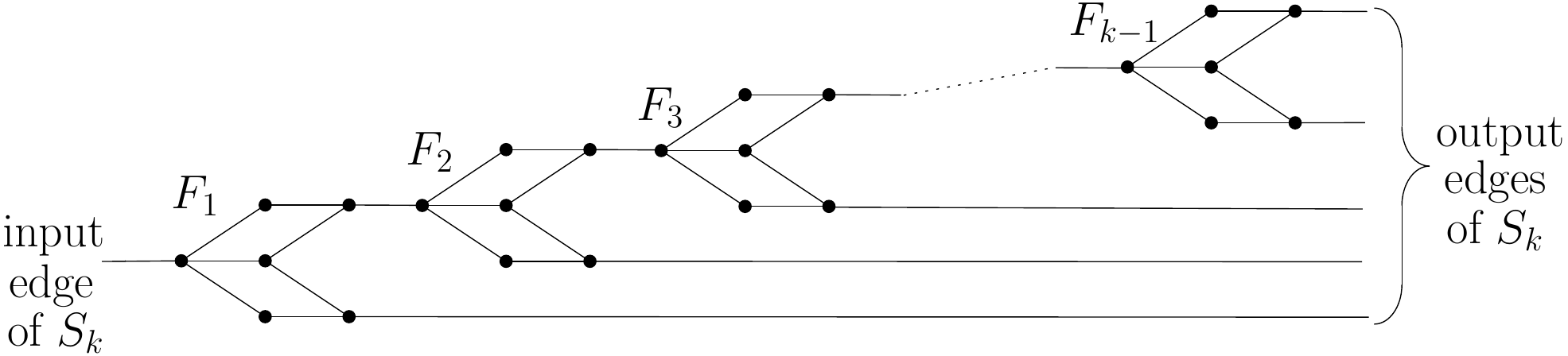}
	\caption{The split gadget $S_k$.}\label{fi:split-gadget} 
\end{figure}

\begin{lemma}\label{le:split-gadget}
Let $G$ be an undirected graph containing a split gadget $S_k$ that does not contain the vertices $s$ or $t$. In any non-transitive st-orientation of $G$, the $k$ output edges of $S_k$ are all oriented exiting $S_k$ if the input edge of $S_k$ is oriented entering $S_k$. Otherwise, if the input edge of $S_k$ is oriented exiting $S_k$ the ouput edges of $S_k$ are all oriented entering $S_k$.
\end{lemma}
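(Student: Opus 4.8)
The plan is a straightforward induction along the chain $F_1, F_2, \dots, F_{k-1}$, using \cref{le:fork-gadget} as the only tool. First I would fix the orientation bookkeeping at the identifications: since edge $e_9$ of $F_i$ is identified with edge $e_1$ of $F_{i+1}$, and these two ports sit on opposite sides of the identification, an orientation of the resulting common edge that exits $F_i$ is exactly the orientation that enters $F_{i+1}$ (and conversely). Likewise, each edge $e_{10}$ of $F_i$, as well as the edge $e_9$ of $F_{k-1}$, is a port of $S_k$ towards the rest of $G$ and is not identified with any edge internal to $S_k$; hence for such an edge ``exiting $F_i$'' coincides with ``exiting $S_k$'' (and similarly for ``entering''). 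Finally, since each $F_i$ lies inside $S_k$ it contains neither $s$ nor $t$, so \cref{le:fork-gadget} is applicable to every $F_i$.

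Assume the input edge of $S_k$, i.e.\ the edge $e_1$ of $F_1$, is oriented entering $S_k$ (hence entering $F_1$). I would then prove by induction on $i \in \{1,\dots,k-1\}$ the claim: the edges $e_9$ and $e_{10}$ of $F_i$ are both oriented exiting $F_i$. For $i=1$ this is exactly \cref{le:fork-gadget} applied to $F_1$, since $e_1$ of $F_1$ enters $F_1$. For the inductive step, assuming $e_9$ of $F_i$ exits $F_i$, the bookkeeping above gives that $e_1$ of $F_{i+1}$ enters $F_{i+1}$, and \cref{le:fork-gadget} applied to $F_{i+1}$ yields that $e_9$ and $e_{10}$ of $F_{i+1}$ both exit $F_{i+1}$. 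In particular, $e_{10}$ of $F_i$ exits $S_k$ for every $i \in \{1,\dots,k-1\}$ and $e_9$ of $F_{k-1}$ exits $S_k$; these are precisely the $k$ output edges of $S_k$, so they are all oriented exiting $S_k$.

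The remaining case is symmetric. If the input edge of $S_k$ exits $F_1$, then \cref{le:fork-gadget} (which forces $e_9$ and $e_{10}$ of a fork to be both exiting or both entering, and both entering exactly when $e_1$ exits) gives that $e_9$ and $e_{10}$ of $F_1$ both enter $F_1$; propagating along the chain exactly as in the previous paragraph, $e_9$ and $e_{10}$ of $F_i$ both enter $F_i$ for every $i$, so all $k$ output edges of $S_k$ are oriented entering $S_k$.

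I do not expect a genuine obstacle here; the whole argument is a mechanical propagation of \cref{le:fork-gadget} along the chain. The only point requiring a little care is the direction convention at each identification (an edge exiting one fork is an edge entering the next, and the ``terminal'' ports $e_{10}(F_i)$ and $e_9(F_{k-1})$ are ports of $S_k$ itself), which is settled once at the start and then used without further comment in the induction.
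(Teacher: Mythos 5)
Your proof is correct, and it is exactly the argument the authors have in mind: the paper states this lemma with ``the next lemma is immediate and we omit the proof,'' and the intended justification is precisely your induction along the chain, applying \cref{le:fork-gadget} to each $F_i$ with the observation that an edge exiting $F_i$ at an identification is the edge entering $F_{i+1}$. Your careful handling of the port/identification conventions and of the applicability of \cref{le:fork-gadget} (no $s$ or $t$ inside any $F_i$) fills in the omitted details faithfully.
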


\begin{figure}[tb]
	\centering
		\includegraphics[width=0.7\textwidth, page=1]{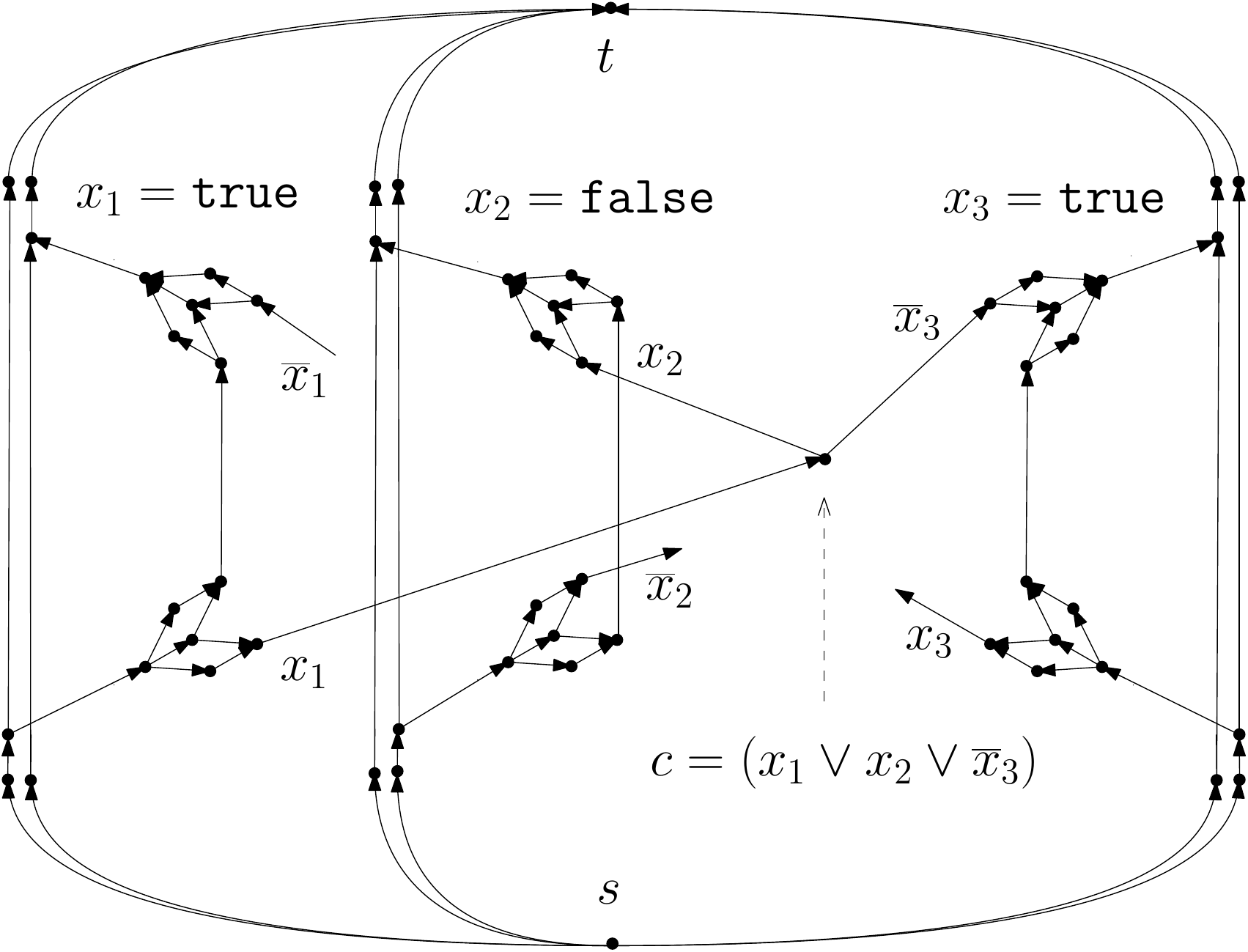}
	\caption{The clause gadget $C_c$ for clause $c = (x_1 \vee x_2 \vee \overline{x}_3)$. The configurations of the three variable gadgets correspond to the truth values $x_1 = \texttt{true}$, $x_2  = \texttt{false}$, and $x_3 = \texttt{true}$. The clause is satisfied because the first literal $x$ is \texttt{true} and the second and third literals $x_2$ and $\overline{x}_3$ are \texttt{false}.}\label{fi:clause-gadget} 
\end{figure}

If the directed literal $x$ (negated literal $\overline{x}$, respectively) occurs in $k$ clauses, we attach the edge denoted $x$ (denoted $\overline{x}$, respectively) of $V_x$ to a split gadget $S_x$, and use the $k$ output edges of $S_x$ to carry the truth value of $x$ (of $\overline{x}$, respectively) to the $k$ clauses.  
The \emph{clause gadget} $C_c$ for a clause $c = (l_1 \vee l_2 \vee l_3)$ is simply a vertex  $v_c$ that is incident to three edges encoding the truth values of the three literals $l_1$, $l_2$, and $l_3$ (see \cref{fi:clause-gadget}). We prove the following.


\begin{restatable}{theorem}{thHardness}\label{th:hardness}
\textsc{NTO} is NP-complete.
\end{restatable}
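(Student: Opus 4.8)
The plan is to establish both membership in NP and NP-hardness. Membership is immediate: an orientation $\sigma$ of $G$ serves as a polynomial-size certificate, and one checks in polynomial time that $\sigma$ is acyclic, that $s$ is its unique source and $t$ its unique sink, and --- after computing the transitive closure --- that no edge of $\sigma$ is transitive. The substance is therefore the hardness reduction from \textsc{NAE3SAT}, for which the fork, variable, split, and clause gadgets together with \cref{le:fork-gadget,le:variable-gadget,le:split-gadget} are already available; what remains is to assemble the instance $I_\varphi=\langle G,s,t\rangle$ and verify the two implications.

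For the assembly I would place, for each variable $x$ of $\varphi$, a variable gadget $V_x$ consisting of two fork gadgets and two paths of length four from $s$ to $t$; for each literal $\ell\in\{x,\overline x\}$ occurring in $k\ge 2$ clauses, a split gadget $S_\ell\cong S_k$ whose input edge is the edge of $V_x$ denoted $\ell$ (when $\ell$ occurs in a single clause, $S_\ell$ degenerates and $\ell$ is routed directly to that clause); and for each clause $c$ a clause vertex $v_c$ whose three incident edges are output edges of the split gadgets of the three literals of $c$. The prescribed vertices $s$ and $t$ are the common endpoints of all the paths of all variable gadgets. I would then note that $|G|$ is polynomial in $|\varphi|$ and that $G\cup(s,t)$ is biconnected, so that $st$-orientations of $G$ exist, and reduce the theorem to the claim that $G$ has a non-transitive $st$-orientation if and only if $\varphi$ is not-all-equal satisfiable.

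The backward implication is the short one. Given a non-transitive $st$-orientation $\sigma$ of $G$, \cref{le:variable-gadget} lets me read a truth value off each $V_x$, and \cref{le:split-gadget} ensures each split gadget propagates that value to all of its output edges; then, since every clause vertex $v_c$ is distinct from $s$ and $t$, in any $st$-orientation it has in-degree and out-degree at least $1$, so the three literals of $c$ are neither all \texttt{true} nor all \texttt{false} --- the NAE condition holding regardless of which value/orientation convention one fixes at $v_c$. For the forward implication I would start from a not-all-equal satisfying assignment, orient each $V_x$ in its \texttt{true} or \texttt{false} orientation, propagate through the split gadgets, and orient the remaining path edges from $s$ to $t$; this makes every $v_c$ incident to both an entering and a leaving edge, hence neither a source nor a sink, and I would then verify that the whole orientation is acyclic with $s$ the only source, $t$ the only sink, and no transitive edge.

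I expect this final verification --- realizability of the forward direction --- to be the main obstacle, since \cref{le:fork-gadget,le:variable-gadget,le:split-gadget} are phrased only as \emph{necessary} conditions: here I must exhibit a concrete non-transitive $st$-orientation and certify that \emph{no} transitive edge, and no extra source or sink, appears anywhere in $G$. Inside the gadgets this follows from the local case analysis behind \cref{fi:gadget}, but one must also argue it at the seams where fork gadgets are identified and along the paths $P_x,P_{\overline x}$; the choice of length four for these paths and of the middle-vertex attachment of $e_1$ is precisely what gives $e_1$ its orientation freedom while preventing short transitive shortcuts, and turning this into an airtight seam-by-seam argument is the technical heart of the reduction. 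Tidying up the degenerate single-occurrence literals and checking biconnectivity of $G\cup(s,t)$ are the remaining minor points.
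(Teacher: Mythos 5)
Your proposal follows the same route as the paper: membership in \NP{} via the orientation as certificate, plus the \textsc{NAE3SAT} reduction assembled from the fork, variable, split, and clause gadgets, with the two implications argued exactly as the paper argues them (clause vertices are neither sources nor sinks, so each clause gets at least one \texttt{true} and one \texttt{false} literal; conversely a not-all-equal assignment induces an orientation of every gadget). Your treatment of \NP-membership is in fact slightly more explicit than the paper's one-line remark.

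The one place where you stop short --- certifying that the orientation built in the ``satisfiable $\Rightarrow$ orientable'' direction is acyclic, has $s$ and $t$ as its only source and sink, and is transitive-edge-free --- is precisely where the paper supplies the missing content, and it is worth seeing how little is actually needed. Non-transitivity and the absence of internal sources/sinks are local: each gadget, oriented as in \cref{fi:variable-gadget} and \cref{fi:clause-gadget} and propagated through the split gadgets, is checked directly against the figures, and the seams are exactly the identified edges $e_9=e_1$ and $e_9=e_{10}$ whose consistency is what \cref{le:fork-gadget} was designed to deliver. For acyclicity the paper uses a global flow argument rather than a seam-by-seam one: every directed path entering a variable gadget $V_x$ terminates at $t$ without leaving $V_x$, and every directed path leaving $V_x$ originates at $s$ without entering it, so no directed cycle can pass through a variable gadget; and by \cref{le:split-gadget} a split gadget's output edges are all entering or all exiting, so no directed path can enter a split gadget from one clause vertex and exit toward another, ruling out cycles confined to split and clause gadgets. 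If you adopt that observation, the ``technical heart'' you flag collapses to a finite check of the figures, and your proof is complete and essentially identical to the paper's.
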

\begin{sketch}
The reduction from an instance $\varphi$ of \textsc{NAE3SAT} to an instance~$I_\varphi$ described above is performed in time linear in the size of $\varphi$. Also, $I_\varphi$ is positive if and only if $\varphi$ is positive. Indeed, in any non-transitive $st$-orientation of $G$ each vertex $v_c$ of a clause gadget $C_c$ has at least one incoming and one outgoing edge, as well as in any truth assignment that satisfies $\varphi$ each clause $c$ has at least one \texttt{true} and one \texttt{false}~literal. Finally, \textsc{NTO} is trivially in NP, as one can non-deterministically explore all possible orientations of the~graph. 
\end{sketch}

The analogous problem where the source and the target vertices of $G$ are not prescribed but can be freely choosen is also NP-complete (see \cref{sse:app-hardness}).

\section{ILP Model for Planar Graphs}\label{se:ilp}


Let $G$ be a planar graph with two prescribed vertices $s$ and $t$, such that $G \cup (s,t)$ is biconnected and such that $G$ admits a planar embedding with $s$ and $t$ on the external face. In this section we describe how to compute an $st$-orientation of $G$ with the minimum number of transitive edges by solving an ILP model.    

Suppose that $G'$ is the plane $st$-graph resulting from a planar $st$-orientation of $G$, along with a planar embedding where $s$ and $t$ are on the external face. It is well known (see, e.g.,~\cite{DBLP:books/ph/BattistaETT99}) that 
for each vertex $v \neq s,t$ in $G'$, all incoming edges of $v$ (as well as all outgoing edges of $v$) appear consecutively around $v$. Thus, the circular list of edges incident to $v$ can be partitioned into two linear lists, one containing the incoming edges of $v$ and the other containing the outgoing edges of $v$. Also, the boundary of each internal face $f$ of $G'$ consists of two edge-disjoint directed paths, called the \emph{left path} and the \emph{right path} of $f$, sharing the same end-vertices (i.e., the same source and the same destination). 
It can be easily verified that an edge $e$ of $G'$ is transitive if and only if it coincides with either the left path or the right path of some face of $G'$ (see also Claim 2 in~\cite{DBLP:journals/corr/abs-2105-06955}). Note that, since the transitivity of $e$ does not depend on the specific planar embedding of~$G'$, the aforementioned property for $e$ holds for every planar embedding of $G'$. Due to this observation, in order to compute a planar $st$-orientation of $G$ with the minimum number of transitive edges, we can focus on any arbitrarily chosen planar embedding of $G$ with $s$ and $t$ on the external face.   

Let $e_1$ and $e_2$ be two consecutive edges encountered moving clockwise along the boundary of a face $f$, and let $v$ be the vertex of $f$ shared by $e_1$ and $e_2$. The triple $(e_1,v,e_2)$ is an \emph{angle of $G$ at $v$ in $f$}.
Denote by $\deg(f)$ the number of angles in $f$ and by $\deg(v)$ the number of angles at $v$.
As it was proved in~\cite{DBLP:journals/jgaa/DidimoP03}, all planar $st$-orientations of the plane graph $G$ can be characterized in terms of labelings of the angles of $G$. Namely, each planar $st$-orientation of $G$ has a one-to-one correspondence with an angle labeling, called an \emph{$st$-labeling} of~$G$, that satisfies the following properties: 
\begin{itemize}
\item[\textsf{(L1)}] Each angle is labeled either S (small) or F (flat), except the angles at $s$ and at $t$ in the external face, which are not labeled; 
\item[\textsf{(L2)}] Each internal face $f$ has 2 angles labeled S and $\deg(f)-2$ angles labeled~F; 
\item[\textsf{(L3)}] For each vertex $v \neq s,t$ there are $\deg(v)-2$ angles at $v$ labeled~S and $2$ angles at $v$ labeled~F; 
\item[\textsf{(L4)}] All angles at $s$ and $t$ in their incident internal faces are labeled S.
\end{itemize}

Given an $st$-labeling of $G$, the corresponding $st$-orientation of $G$ is such that for each vertex $v \neq s,t$, the two F angles at $v$ separate the list of incoming edges of $v$ to the list of outgoing edges of $v$, while the two S angles in a face $f$ separate the left and the right path of $f$. See Fig.~\ref{fi:st-labeling} for an illustration. The $st$-orientation can be constructed from the $st$-labeling in linear time by a breadth-first-search of~$G$ that starts from~$s$, makes all edges of $s$ outgoing, and progressively orients the remaining edges of~$G$ according to the angle labels.

\begin{figure}[tb]
	\centering
	\subfigure[]{
		\includegraphics[height=0.45\textwidth, page=1]{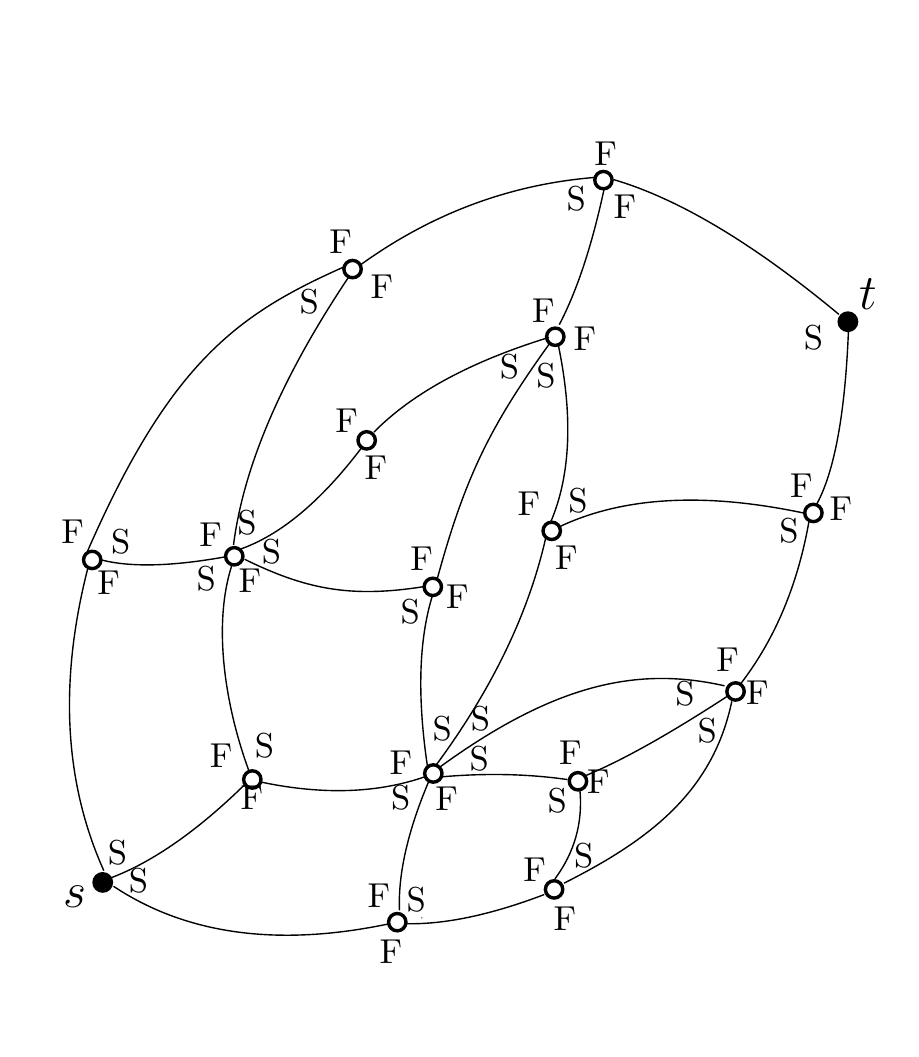}
		\label{fi:st-labeling-a}
	}
	\hfil
	\subfigure[]{
		\includegraphics[height=0.45\textwidth, page=3]{st-labeling}
		\label{fi:st-labeling-b}
	}
	\caption{(a) An $st$-labeling of a plane graph $G$ with prescribed nodes $s$ and $t$. (b) The corresponding $st$-orientation of $G$. \label{fi:st-labeling}}
\end{figure}

Thanks to the characterization above, an edge $e=(u,v)$ of the $st$-graph resulting from an $st$-orientation is transitive if and only if in the corresponding $st$-labeling the angle at $u$ and the angle at $v$ in one of the two faces incident to $e$ (possibly in both faces) are labeled S. Based on this, we present an ILP model that describes the possible $st$-labelings of $G$ (for any arbitrary planar embedding of $G$ with $s$ and $t$ on the external face) and that minimizes the number of transitive edges. The model aims to assign angle labels that satisfy Properties \textsf{(L1)--(L4)} and counts pairs of consecutive S labels that occur in the circular list of angles in an internal face; additional constraints are needed to avoid that a transitive edge is counted twice when it coincides with both the left and the right path of its two incident faces. The model, which uses a number of variables and constraints that is linear in the size of $G$, is as follows.

\bigskip\noindent{\bf Sets.} Denote by $V$, $E$, and $F$ the sets of vertices, edges, and faces of $G$, respectively. Also let $F_{\rm int} \subset F$ be the set of internal faces of $G$. For each face $f \in F$, let $V(f)$ and $E(f)$ be the set of vertices and the set of edges incident to $f$, respectively. For each vertex $v \in V$, let $F(v)$ be the set of faces incident to $v$ and let $F_{\rm int}(v)$ be the set of internal faces incident to $v$. For each edge $e \in E$, let $F(e)$ be the set consisting of the two faces incident to~$e$.

\smallskip\noindent{\bf Variables.} We define a binary variable $x_{vf}$ for each  vertex $v \in V \setminus \{s,t\}$ and for each face $f \in F(v)$. Also, we define the binary variables $x_{sf}$ (resp. $x_{tf}$) for each face $f \in F_{\rm int}(s)$ (resp. $f \in F_{\rm int}(t)$). If $x_{vf}=1$ (resp. $x_{vf}=0$)  we assign an S label (resp. an F label) to the angle at $v$ in $f$.
 
For each internal face $f \in F_{\rm int}$ and for each edge $(u,v) \in E(f)$, we define a binary variable $y_{uvf}$. An assignment $y_{uvf}=1$ indicates that both the angles at $u$ and at $v$ in $f$ are labeled S, that is, $x_{uf}=1$ and $x_{vf}=1$. As a consequence, if $y_{uvf}=1$ edge $(u,v)$ is transitive. Note however that the sum of all $y_{uvf}$ does not always correspond to the number of transitive edges; indeed, if $f$ and $g$ are the two internal faces incident to edge $(u,v)$, it may happen that both $y_{uvf}$ and $y_{uvg}$ are set to one, thus counting $(u,v)$ as transitive twice. To count the number of transitive edges without repetitions, we introduce another binary variable $z_{uv}$, for each edge $(u,v) \in E$, such that $z_{uv}=1$ if and only if $(u,v)$ is transitive. 

\smallskip\noindent{\bf Objective function and constraints.} The objective function and the set of constraints are described by the formulas~$(1)$--$(8)$. The objective is to minimize the total number of transitive edges, i.e., the sum of the variables~$z_{uv}$. 
Constraints~\ref{ilp:S-in-face} and~\ref{ilp:S-at-vertex} guarantee Properties~\textsf{(L2)} and~\textsf{(L3)} of the $st$-labeling, respectively, while Constraints~\ref{ilp:S-at-s} and~\ref{ilp:S-at-t} guarantee Property~\textsf{(L4)}. Constraints~\ref{ilp:penalty-transitive-edge} relate the values of the variables $y_{uvf}$ to the values of $x_{uf}$ and $x_{vf}$. Namely, they guarantee that $y_{uvf}=1$ if and only if both $x_{uf}$ and $x_{vf}$ are set to 1. Constraints~\ref{ilp:avoid-double-penalty} relate the values of the variables $z_{uv}$ to those of the variables $y_{uvf}$; they guarantee that an edge $(u,v)$ is counted as transitive (i.e., $z_{uv}=1$) if and only if in at least one of the two faces $f$ incident to $(u,v)$ both the angle at $u$ and the angle at $v$ are labeled S. Finally, we explicitly require that $x_{uv}$ and $y_{uv}$ are binary variables, while we only require that each $z_{uv}$ is a non-negative integer; this helps to speed-up the solver and, along with the objective function, is enough to guarantee that each $z_{uv}$ takes value 0 or 1.

\begin{align}
\min \sum_{(u,v) \in E}z_{uv} \label{ilp:obj}\\
\sum_{v \in V(f)} x_{vf} = 2 \;\;\;\;\;\; \forall f \in F_{\rm int} \label{ilp:S-in-face}\\
\sum_{f \in F(v)} x_{vf} = \deg(v)-2  \;\;\;\;\;\; \forall v \in V \setminus \{s,t\} \label{ilp:S-at-vertex}\\
x_{sf}=1 \;\;\;\;\;\; \forall f \in F_{\rm int} \cap F(s) \label{ilp:S-at-s}\\
x_{tf}=1 \;\;\;\;\;\; \forall f \in F_{\rm int} \cap F(t) \label{ilp:S-at-t}\\
x_{uf} + x_{vf} \leq y_{uvf} + 1  \;\;\;\;\;\; \forall f \in F_{\rm int} \;\;\;\ \forall (u,v) \in E(f) \label{ilp:penalty-transitive-edge}\\
z_{uv} \geq y_{uvf} \;\;\;\;\;\; \forall e=(u,v) \in E \;\;\;\;\;\; \forall f \in F(e) \label{ilp:avoid-double-penalty}\\
x_{vf} \in \{0,1\} \;\;\;\; y_{uvf} \in \{0,1\} \;\;\;\; z_{uv} \in \mathbb{N} 	
\end{align}


\section{Experimental Analysis}\label{se:experiments}

We evaluated the ILP model with the solver IBM ILOG CPLEX 20.1.0.0 (using the default setting), running on a laptop with Microsoft Windows 11 v.10.0.22000 OS, Intel Core i7-8750H 2.20GHz CPU, and 16GB RAM.

\smallskip\noindent{\bf Instances.} The experiments have been executed on a large benchmark of instances, each instance consisting of a plane biconnected graph and two vertices $s$ and $t$ on the external face. These graphs are randomly generated with the same approach used in previous experiments in graph drawing (see, e.g.,~\cite{DBLP:journals/tc/BertolazziBD00}). Namely, for a given integer $n>0$, we generate a plane graph with $n$ vertices starting from a triangle and executing a sequence of steps, each step preserving biconnectivity and planarity. At each step the procedure randomly performs one of the two following operations: $(i)$ an Insert-Edge operation, which splits a face by adding a new edge, or $(ii)$ an Insert-Vertex operation, which subdivides an existing edge with a new vertex. The Insert-Vertex operation is performed with a prescribed probability $p_{\rm iv}$ (which is a parameter of the generation process), while the Insert-Edge operation is performed with probability $1-p_{\rm iv}$. For each operation, the elements (faces, vertices, or edges) involved are randomly selected with uniform probability distribution. To avoid multiple edges, if an Insert-Edge operation selects two end-vertices that are already connected by an edge, we discard the selection and repeat the step. Once the plane graph is generated, we randomly select two vertices $s$ and $t$ on its external face, again with uniform probability distribution.
We generated  a sample of 10 instances for each pair $(n,p_{\rm iv})$, with $n \in \{10,20, \dots, 90, 100, 200, \dots, 900, 1000\}$ and $p_{\rm iv} \in \{0.2, 0.4, 0.5, 0.6, 0.8\}$, for a total of 950 graphs. Note that, higher values of $p_{\rm iv}$ lead to sparser graphs.

Table~\ref{ta:density} in the appendix reports for each sample the average, the minimum, and the maximum density (number of edges divided by the number of vertices) of the graphs in that sample, together with the standard deviation. On average, for $p_{\rm iv}=0.8$ we have graphs with density of $1.23$ (close to the density of a tree), for $p_{\rm iv}=0.5$ we have graphs with density of $1.76$, and for $p_{\rm iv}=0.2$ we have graphs with density $2.53$ (close to the density of maximal planar graphs).  

\medskip

\smallskip\noindent{\bf Experimental Goals.} We have three main experimental goals:
\textsf{(G1)} Evaluate the efficiency of our approach, i.e., the running time required by our ILP model; 
\textsf{(G2)} Evaluate the percentage of transitive edges in the solutions of the ILP model and how many transitive edges are saved w.r.t. applying a classical linear-time algorithm that computes an unconstrained $st$-orientation of the graph~\cite{DBLP:journals/tcs/EvenT77}; 
\textsf{(G3)} Evaluate the impact of minimizing the number of transitive edges on the area (i.e. the area of the minimum bounding box) of polyline drawings constructed with algorithms that compute an $st$-orientation as a preliminary step.


About~(G1), we refer to the algorithm that solves the ILP model as \textsc{OptST}. 
About~(G2) and~(G3) we used implementations available in the GDToolkit library~\cite{DBLP:reference/crc/BattistaD13} for the following algorithms:
%
	$(a)$ A linear-time algorithm that computes an unconstrained $st$-orientation of the graph based on the classical $st$-numbering algorithm by Even and Tarjan~\cite{DBLP:journals/tcs/EvenT77}. We refer to this algorithm as \textsc{HeurST}.     
	$(b)$ A linear-time algorithm that first computes a visibility representation of an
	undirected planar graph based on a given $st$-orientation of the graph, and then computes from this representation a planar polyline drawing~\cite{DBLP:journals/tcs/BattistaT88}. We call \textsc{DrawHeurST} and \textsc{DrawOptST} the applications of this drawing algorithm to the $st$-graphs obtained by \textsc{HeurST} and of \textsc{OptST}, respectively. 

\begin{figure}[tb]
	\centering
	\subfigure[]{
		{\includegraphics[width=0.48\textwidth]{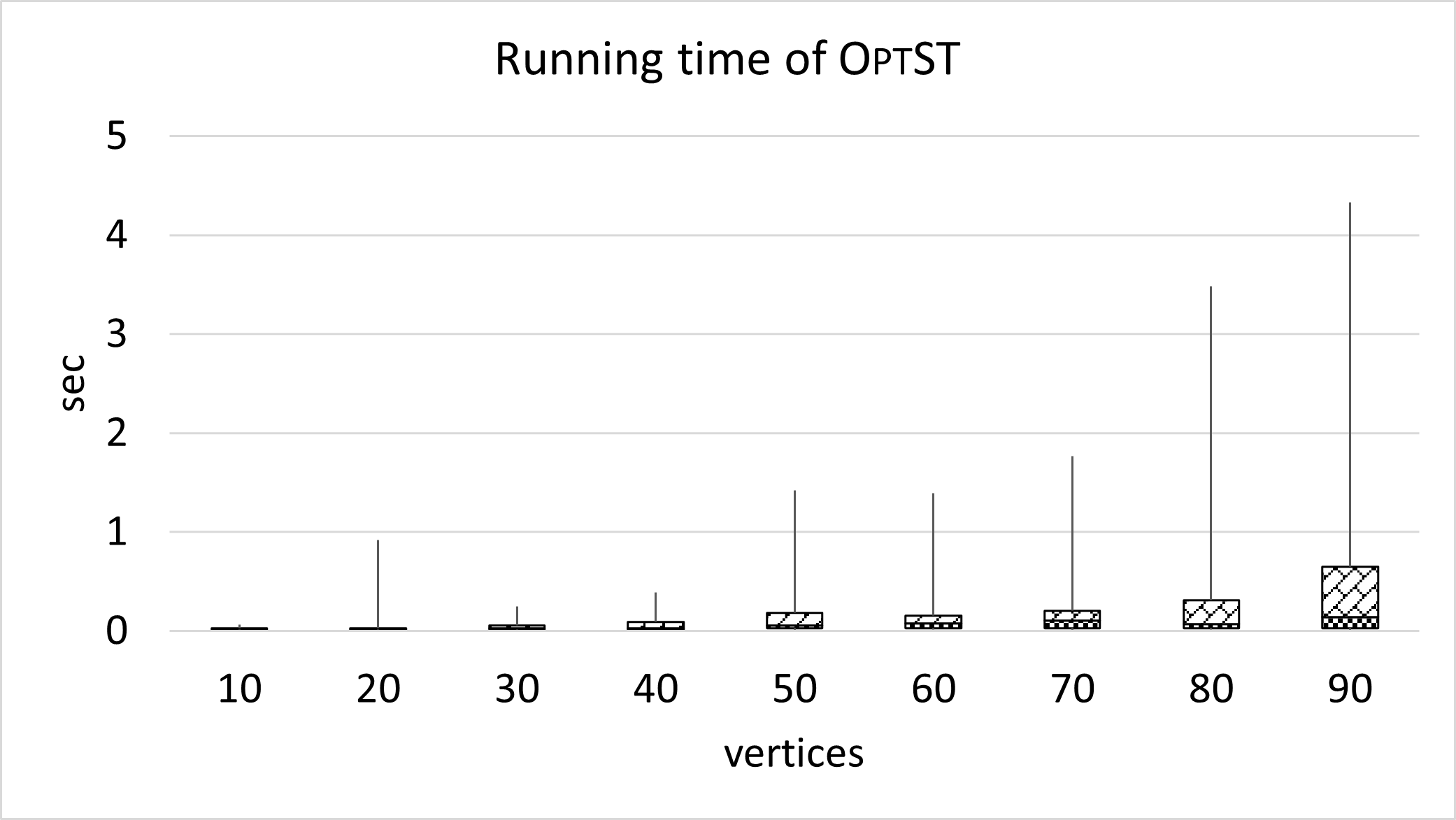}\label{fi:boxplot_10-90}}
	}\hfill
	\subfigure[]{
		{\includegraphics[width=0.48\textwidth]{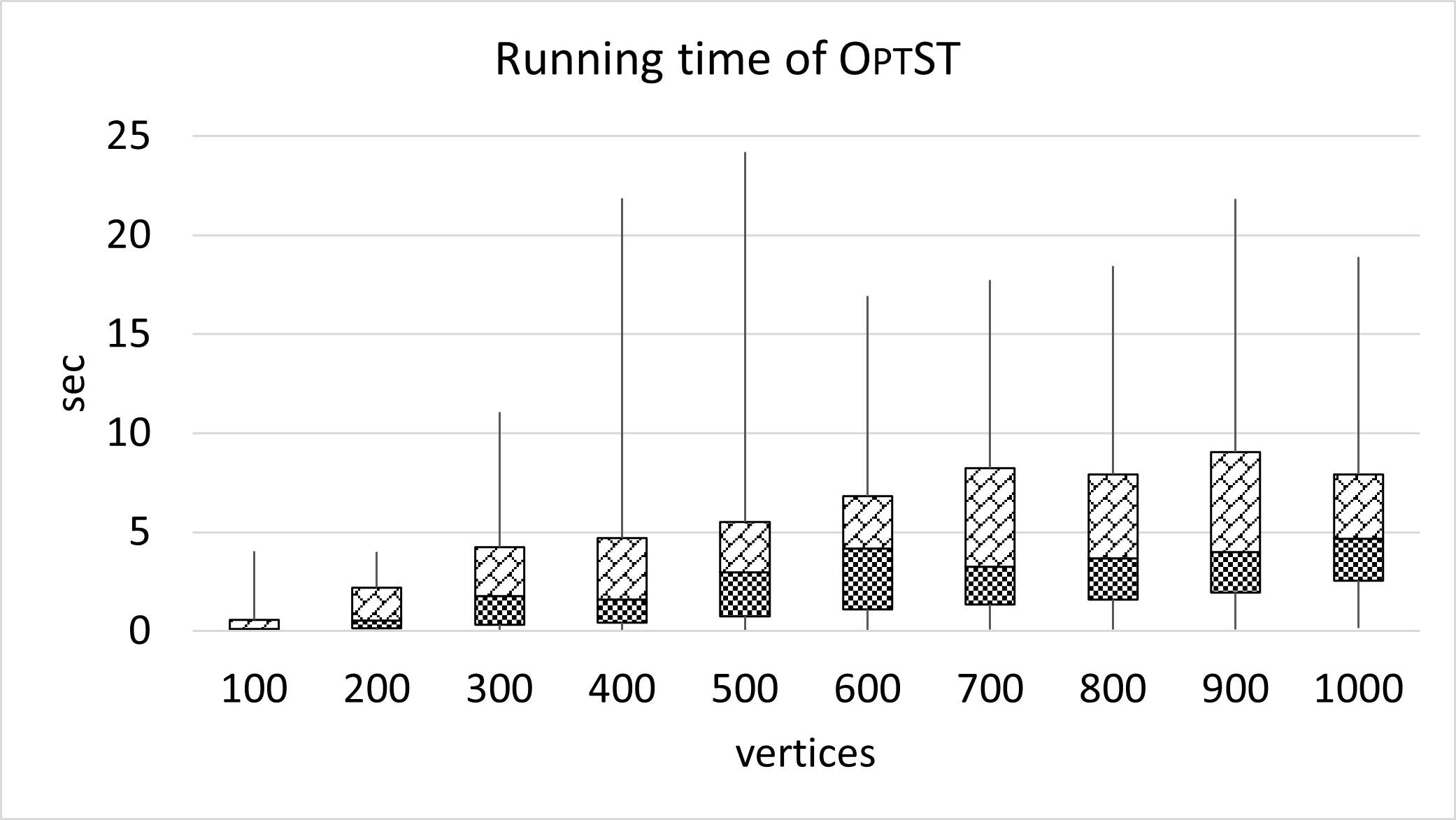}\label{fi:boxplot_100-1000}}
	}
	\caption{Box-plots of the running time of \textsc{OptST}.}\label{fi:box-plot}
\end{figure}
  
\smallskip\noindent{\bf Experimental Results.} About~(G1), \cref{fi:box-plot} reports the running time (in seconds) of \textsc{OptST}, i.e., the time needed by CPLEX to solve our ILP model. To make the charts more readable we split the results into two sets, one for the instances with number of vertices up to 90 and the other for the larger instances. \textsc{OptST} is rather fast: 75\% of the instances with up to 90 vertices is solved in less than one second and all these instances are solved in less than five seconds. For the larger instances (with up to 1000 vertices), 75\% of the instances are solved in less than 10 seconds and all instances are solved in less than 25 seconds. These results clearly indicate that our ILP model can be successfully used in several application contexts that manage graphs with up to thousand vertices.


\begin{figure}[tb]
	\centering
	\subfigure[]{
		{\includegraphics[width=0.475\textwidth]{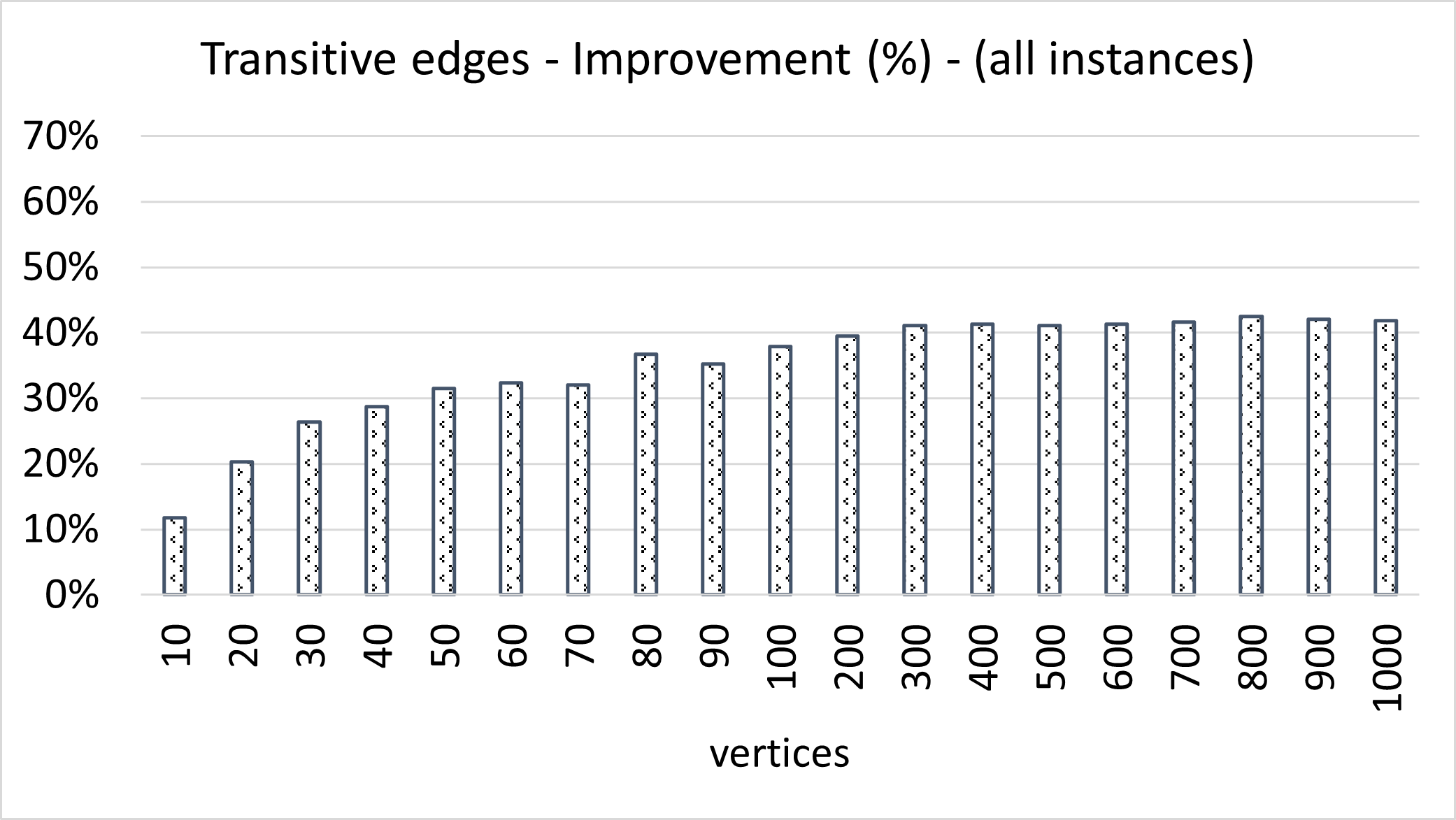}\label{fi:impTrE_all}}
	}
	\subfigure[]{
		{\includegraphics[width=0.475\textwidth]{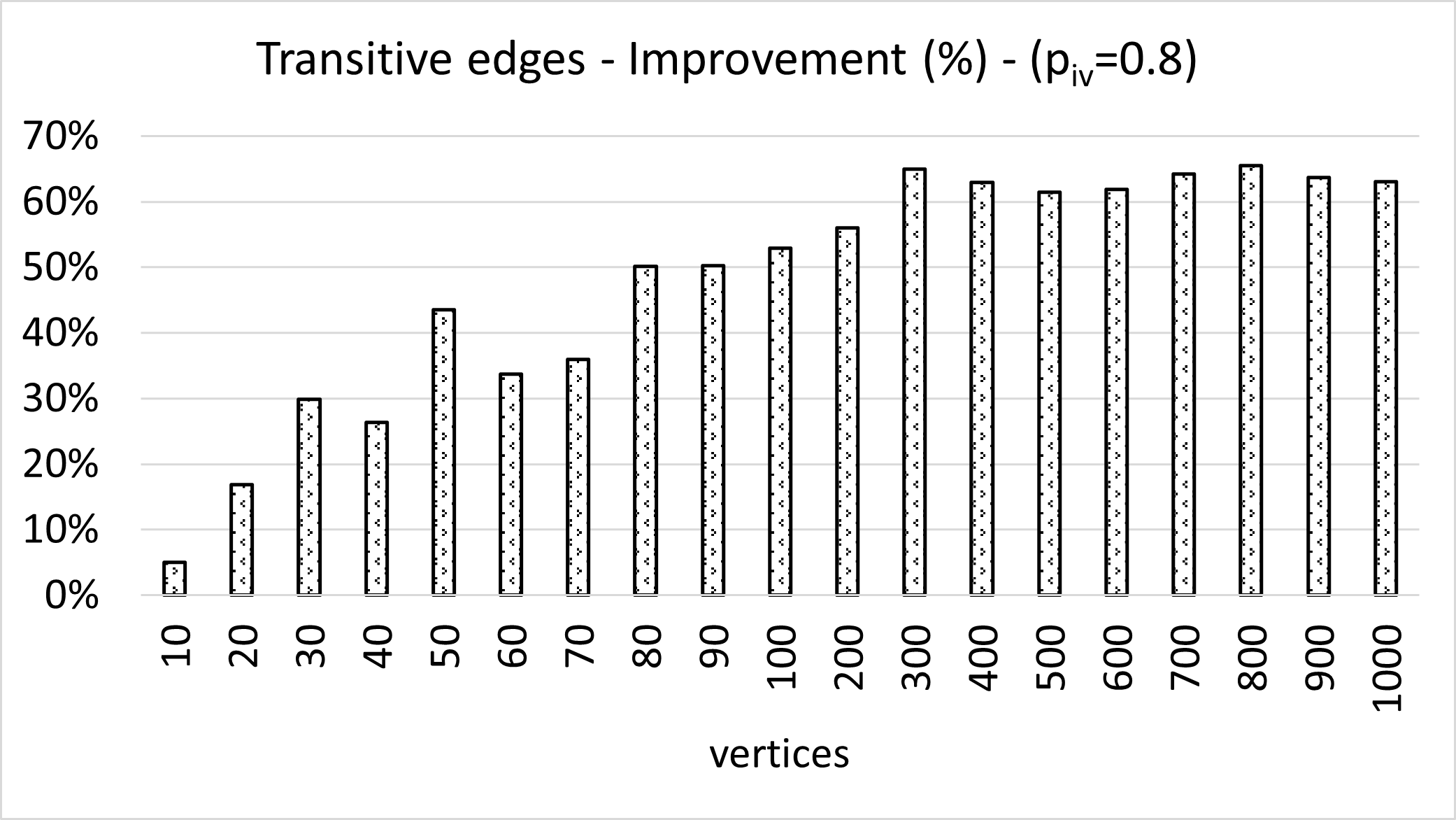}\label{fi:impTrE_0-8}}
	}\hfill
	\subfigure[]{
		{\includegraphics[width=0.475\textwidth]{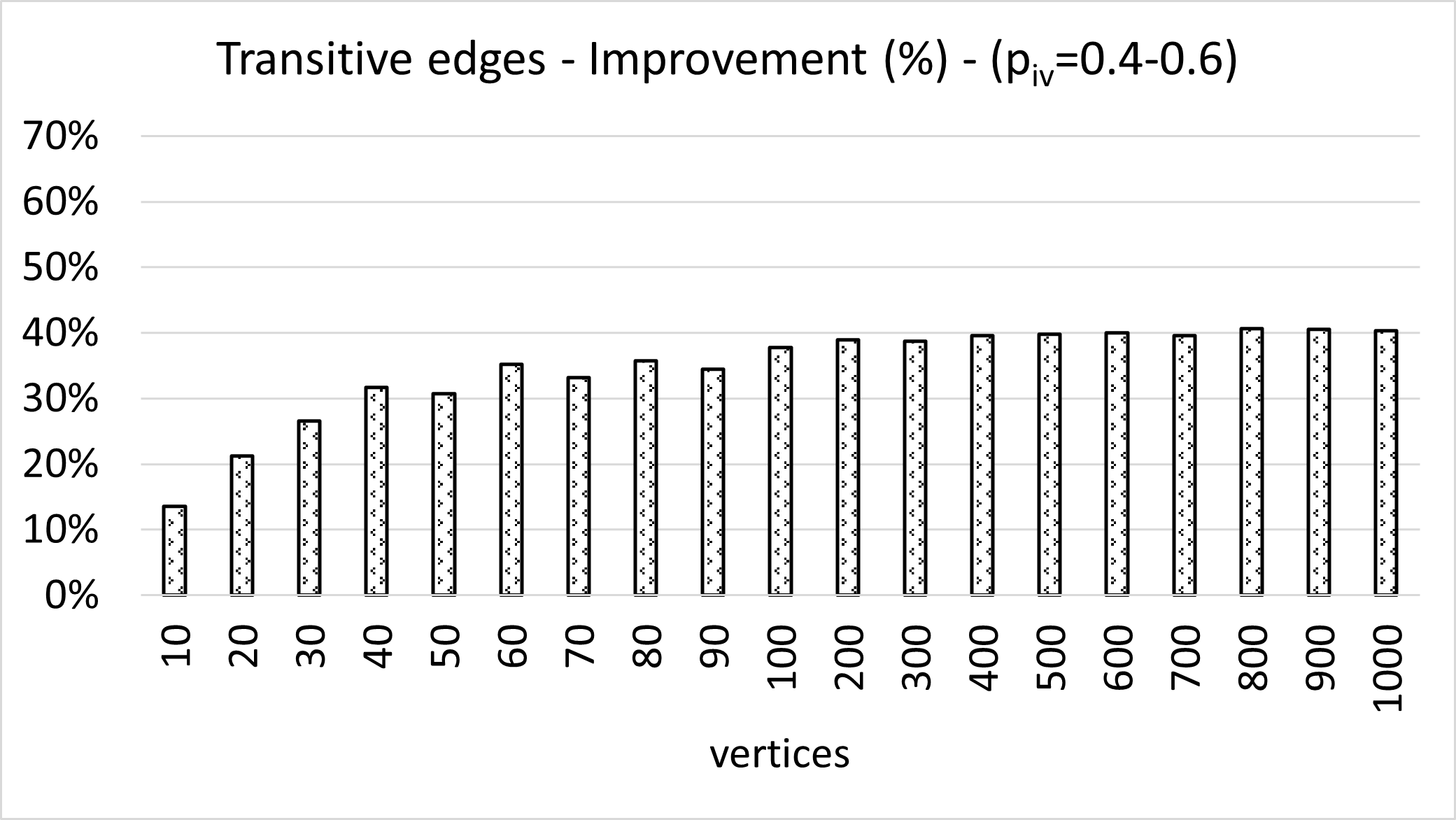}\label{fi:impTrE_0-6_0-4}}
	}\hfill
	\subfigure[]{
		{\includegraphics[width=0.475\textwidth]{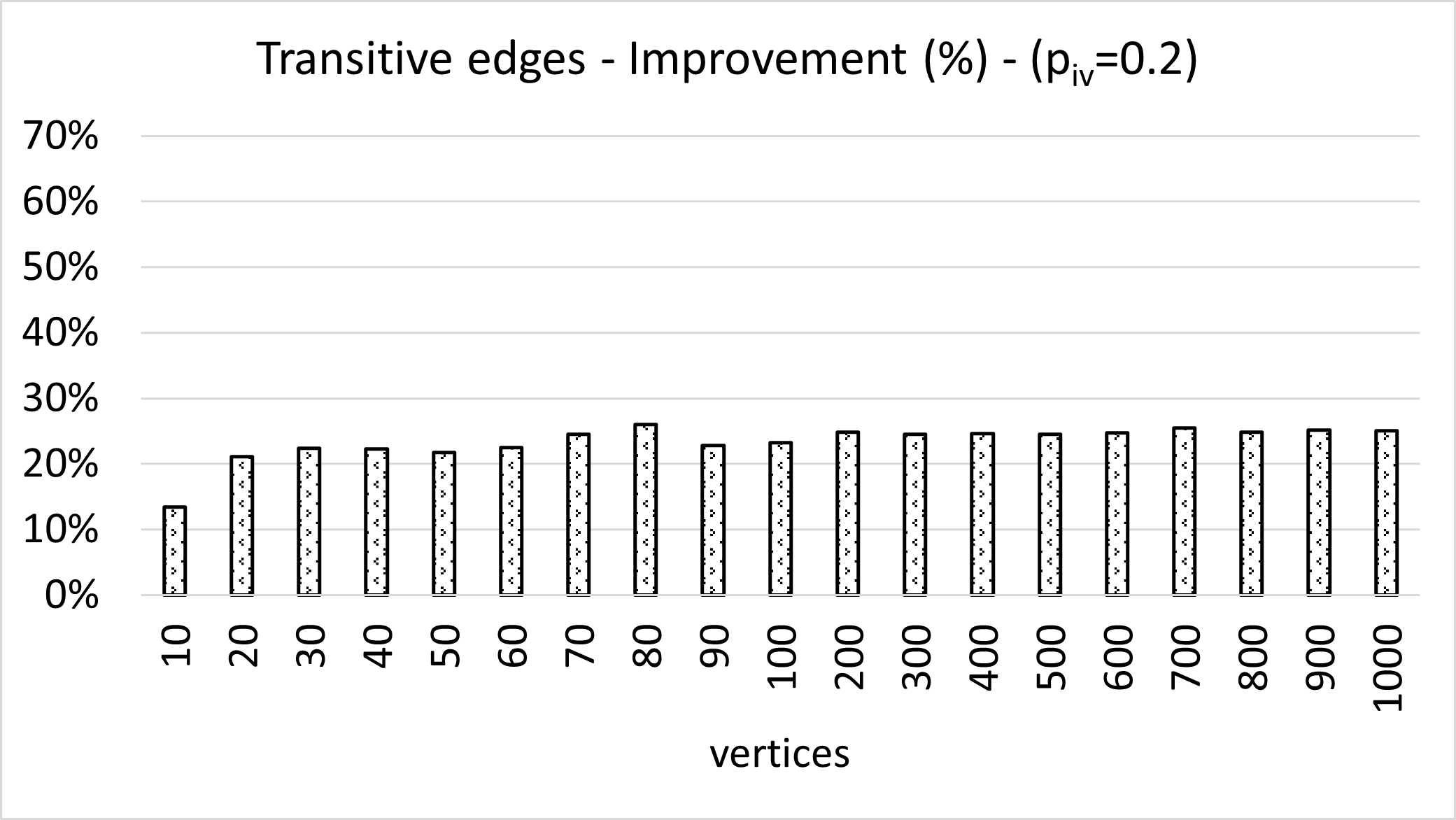}\label{fi:impTrE_0-2}}
	}\caption{Improvement (\%) in the number of transitive edges.}\label{fi:impTrE}
\end{figure}

\begin{figure}[tb]
	\centering
	\subfigure[]{
		{\includegraphics[width=0.475\textwidth]{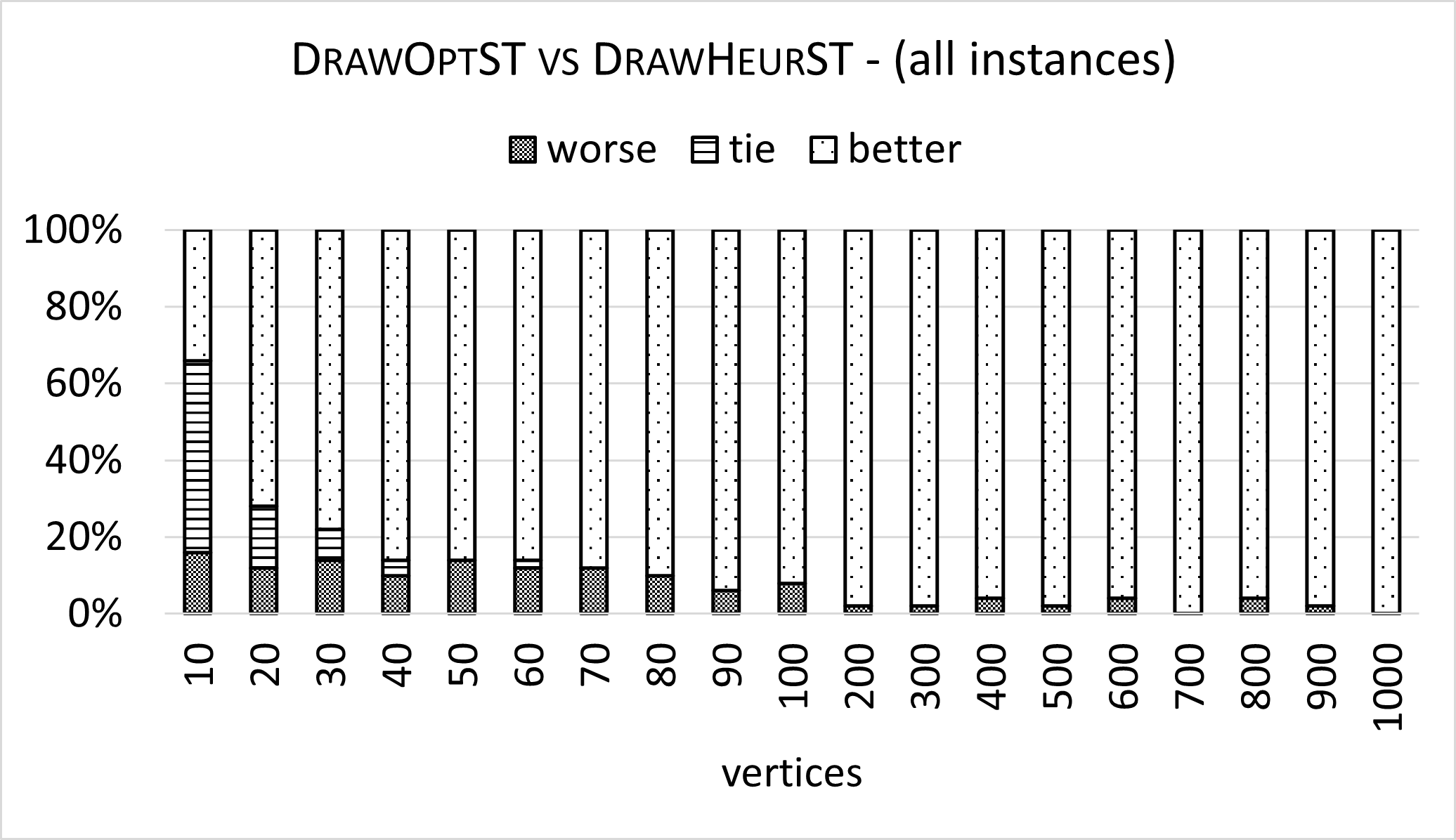}\label{fi:impInstances_all}}
	}
	\subfigure[]{
		{\includegraphics[width=0.475\textwidth]{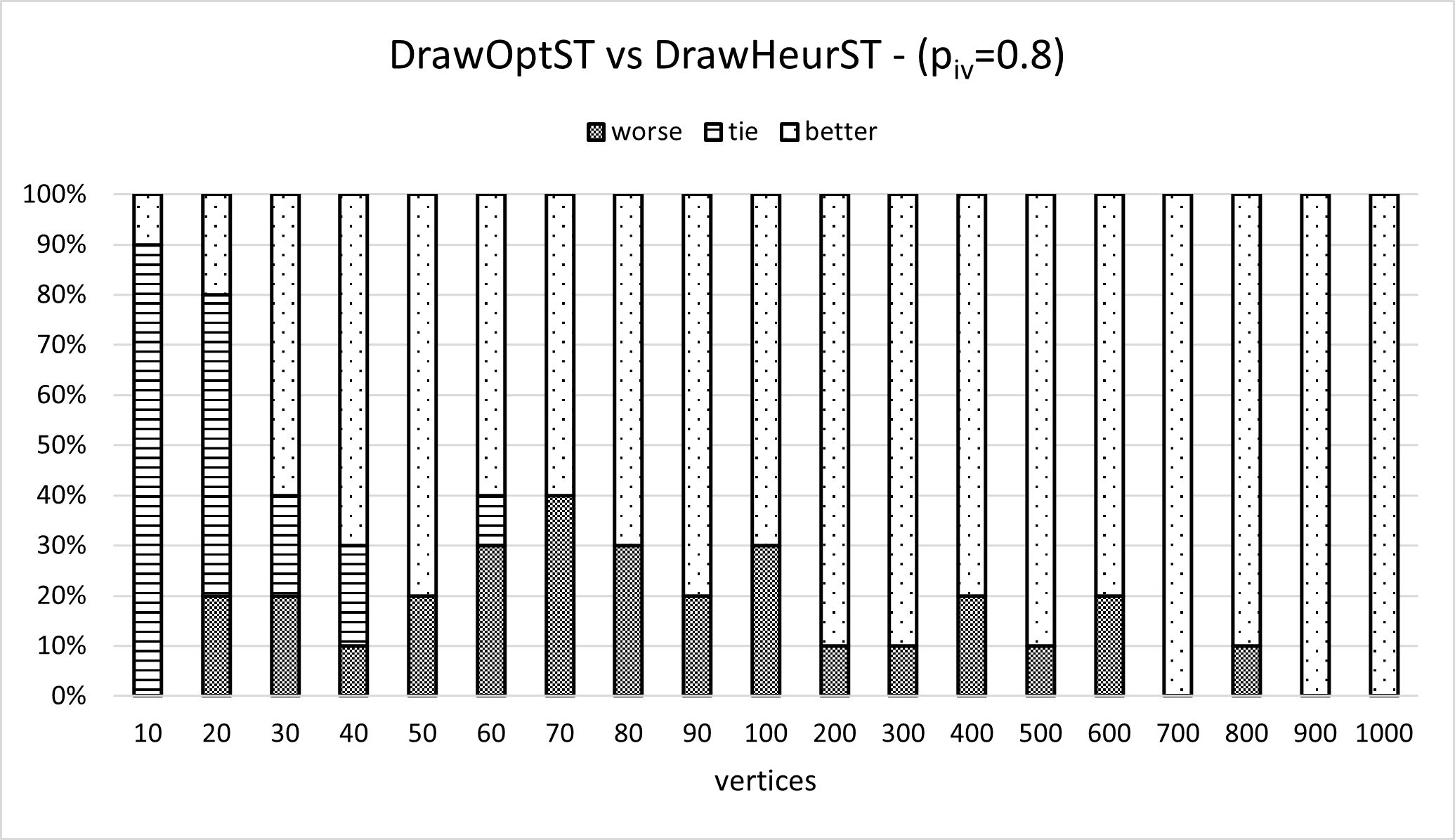}\label{fi:impInstances_0-8}}
	}\hfill
	\subfigure[]{
		{\includegraphics[width=0.475\textwidth]{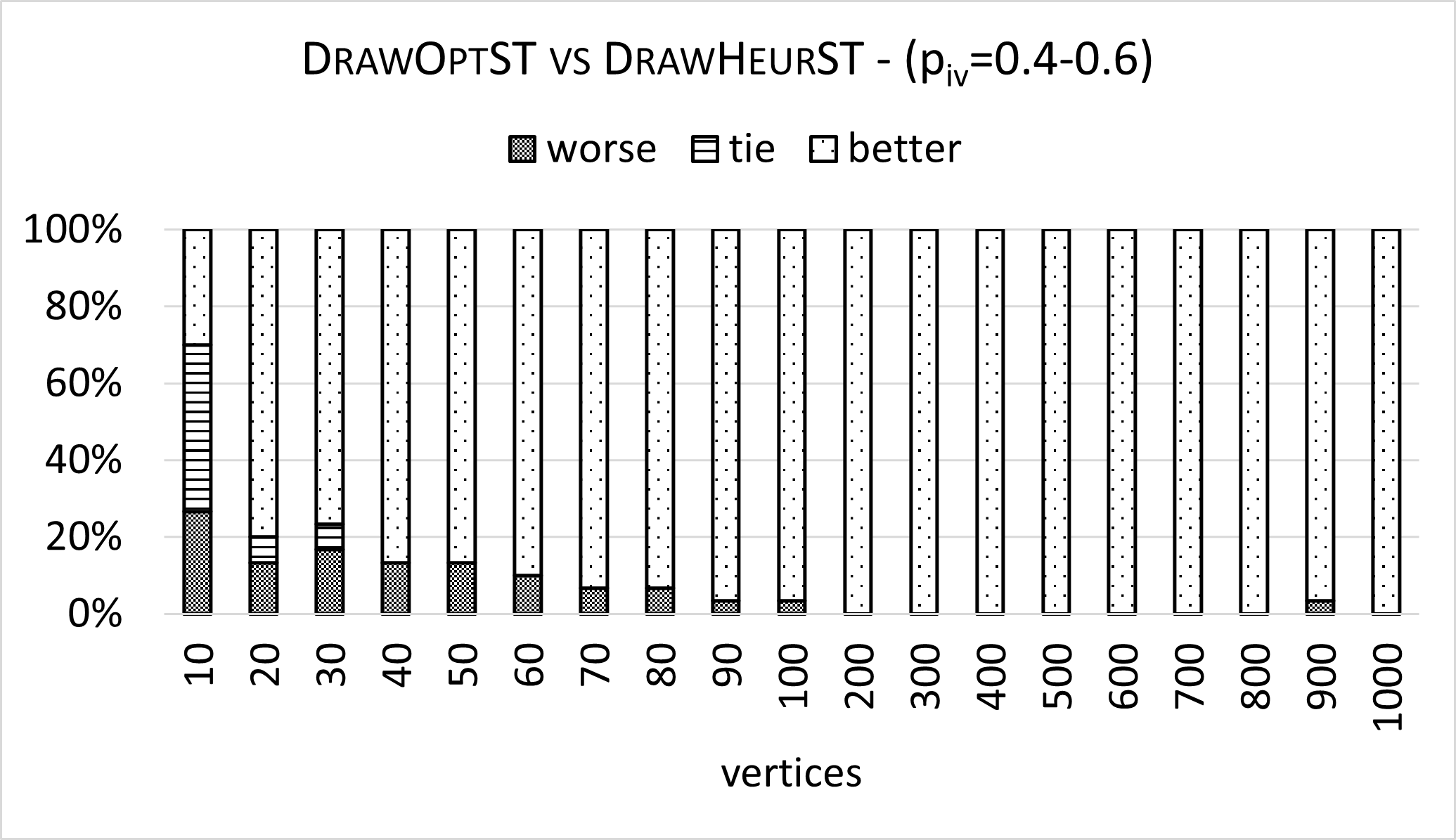}\label{fi:impInstances_0-6_0-4}}
	}\hfill
	\subfigure[]{
		{\includegraphics[width=0.475\textwidth]{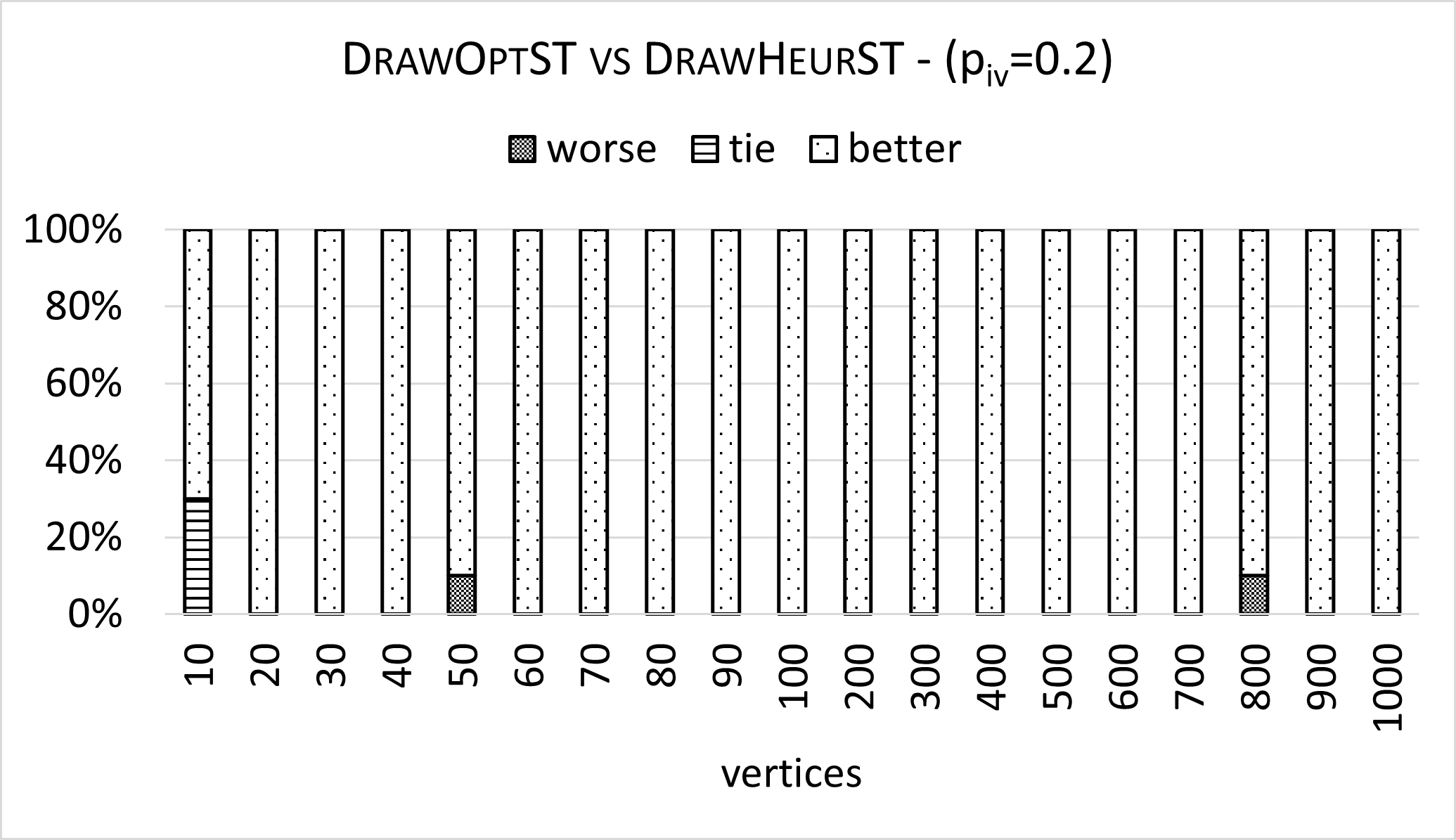}\label{fi:impInstances_0-2}}
	}\caption{Instances for which \textsc{DrawOptST} produces drawings that are more compact than \textsc{DrawHeurST} (label ``better'').}\label{fi:impInstances}
\end{figure}

\begin{figure}[tb]
	\centering
	\subfigure[]{
		{\includegraphics[width=0.475\textwidth]{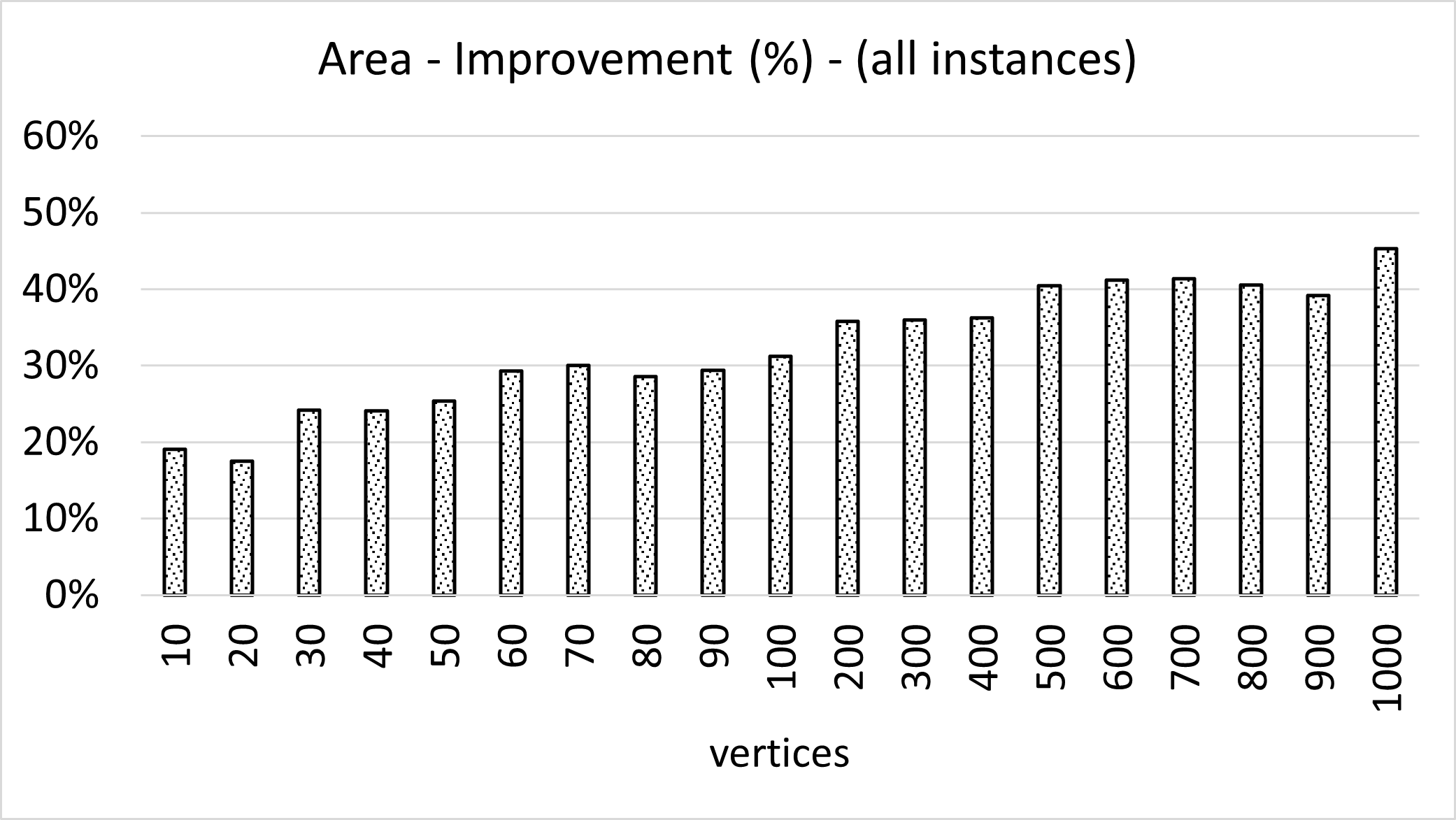}\label{fi:impArea_all}}
	}
	\subfigure[]{
		{\includegraphics[width=0.475\textwidth]{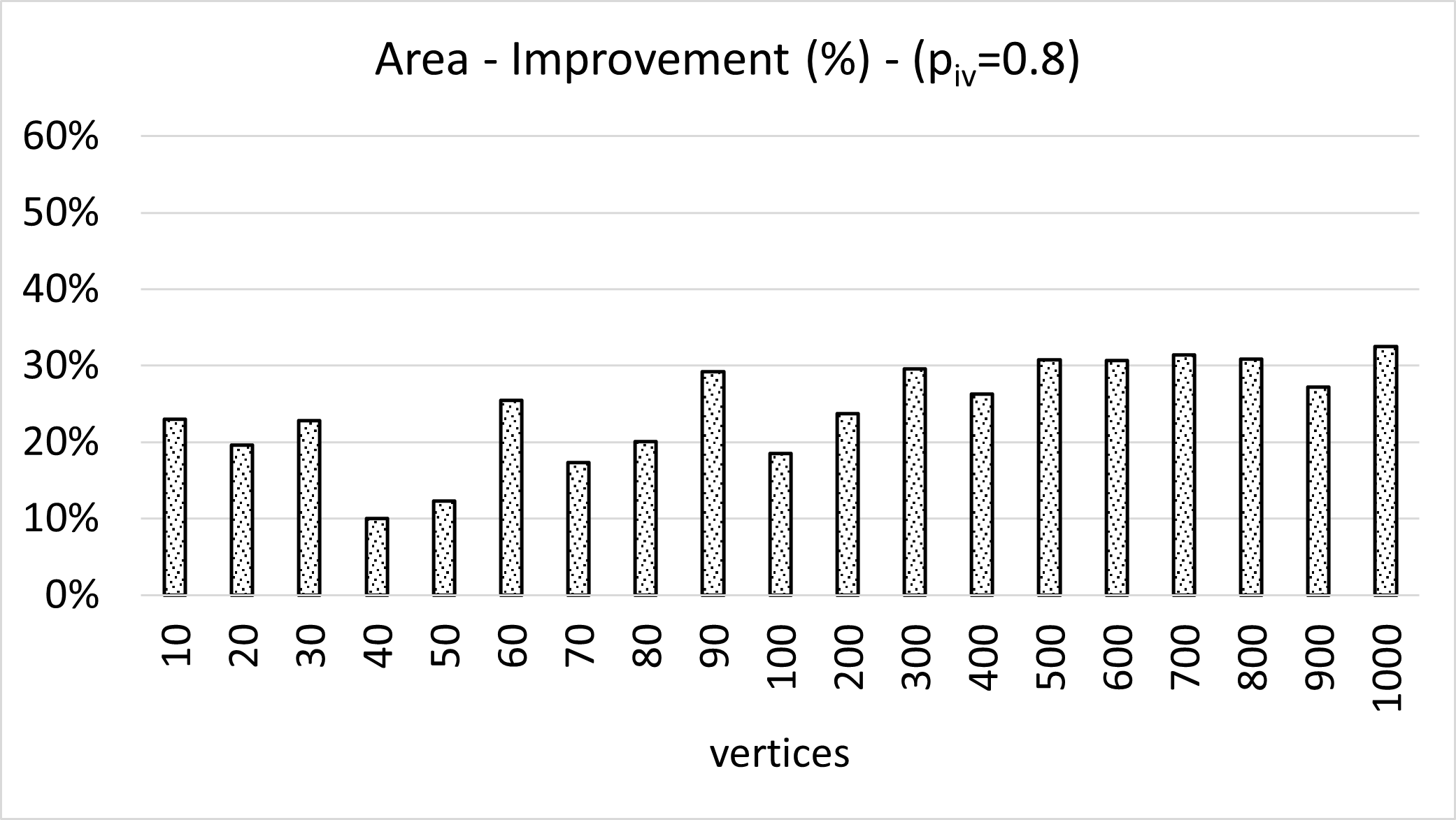}\label{fi:impArea_0-8}}
	}\hfill
	\subfigure[]{
		{\includegraphics[width=0.475\textwidth]{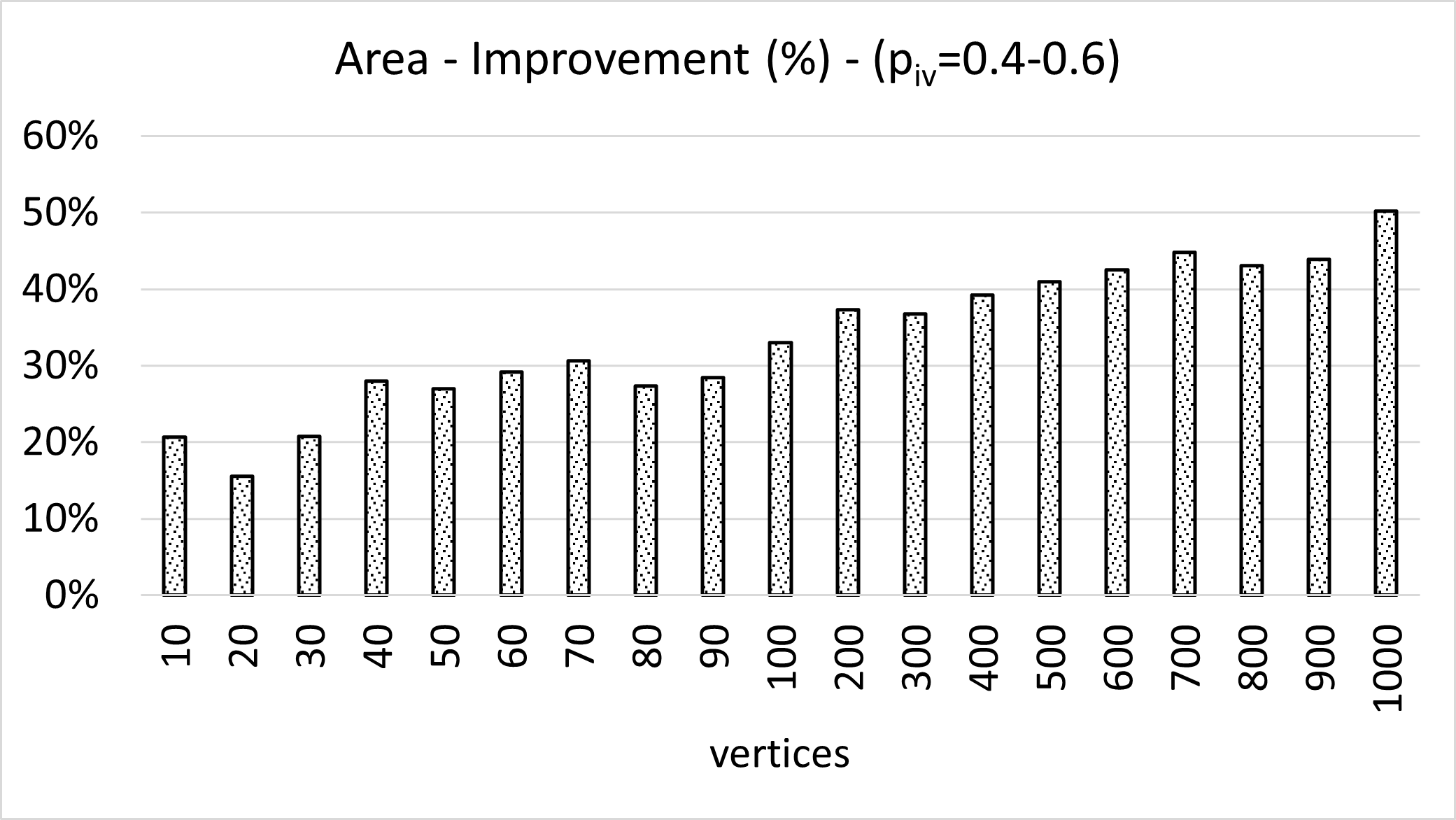}\label{fi:impArea_0-6_0-4}}
	}\hfill
	\subfigure[]{
		{\includegraphics[width=0.475\textwidth]{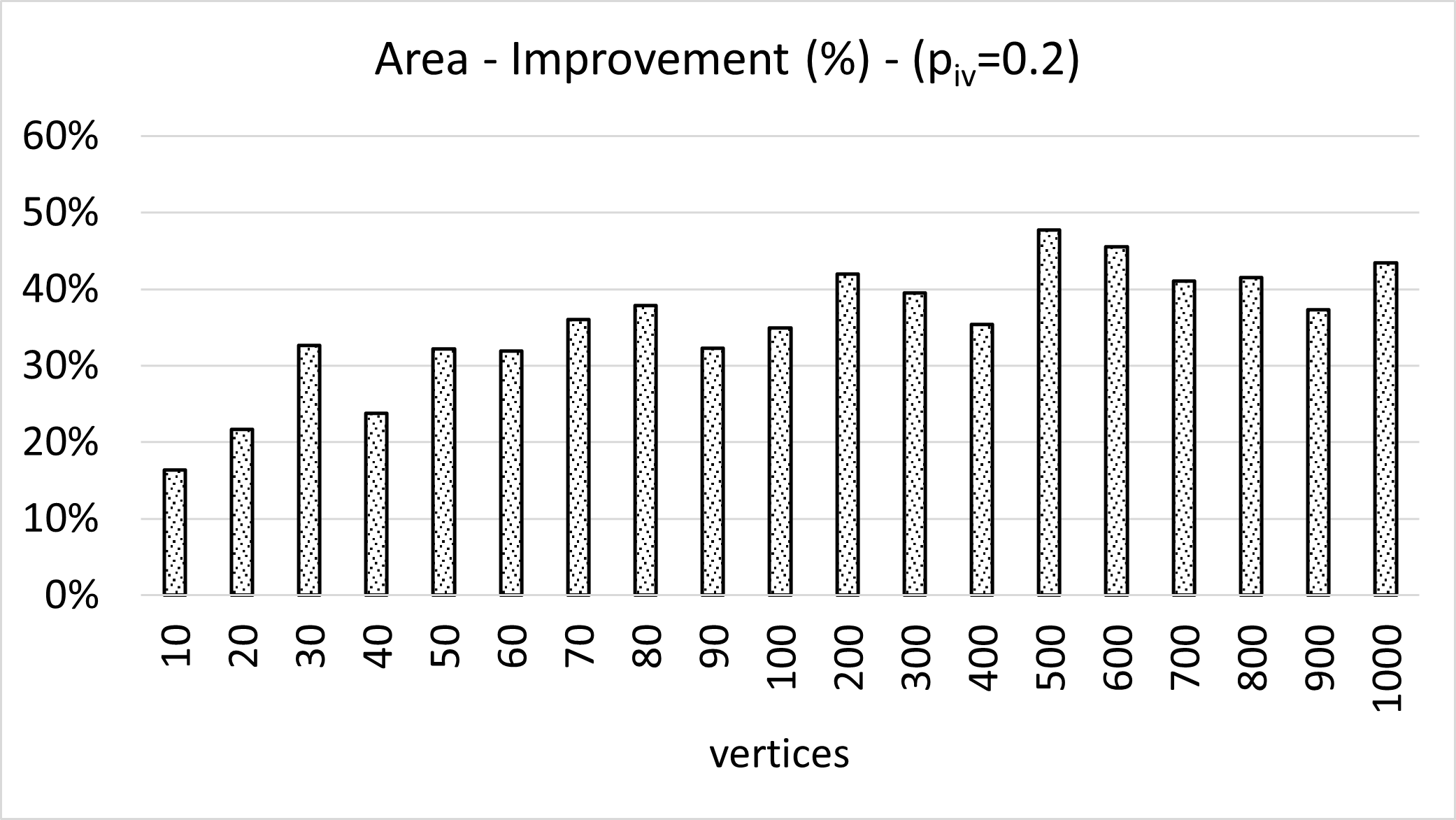}\label{fi:impArea_0-2}}
	}\caption{Area improvement (\%) of \textsc{DrawOptST} w.r.t. \textsc{DrawHeurST}, for the instances where \textsc{DrawOptST} is ``better'' (i.e., the ``better'' instances in \cref{fi:impInstances}).}\label{fi:impArea}
\end{figure}

\begin{figure}[t]
	\centering
	\subfigure[]{
		{\includegraphics[width=0.475\textwidth]{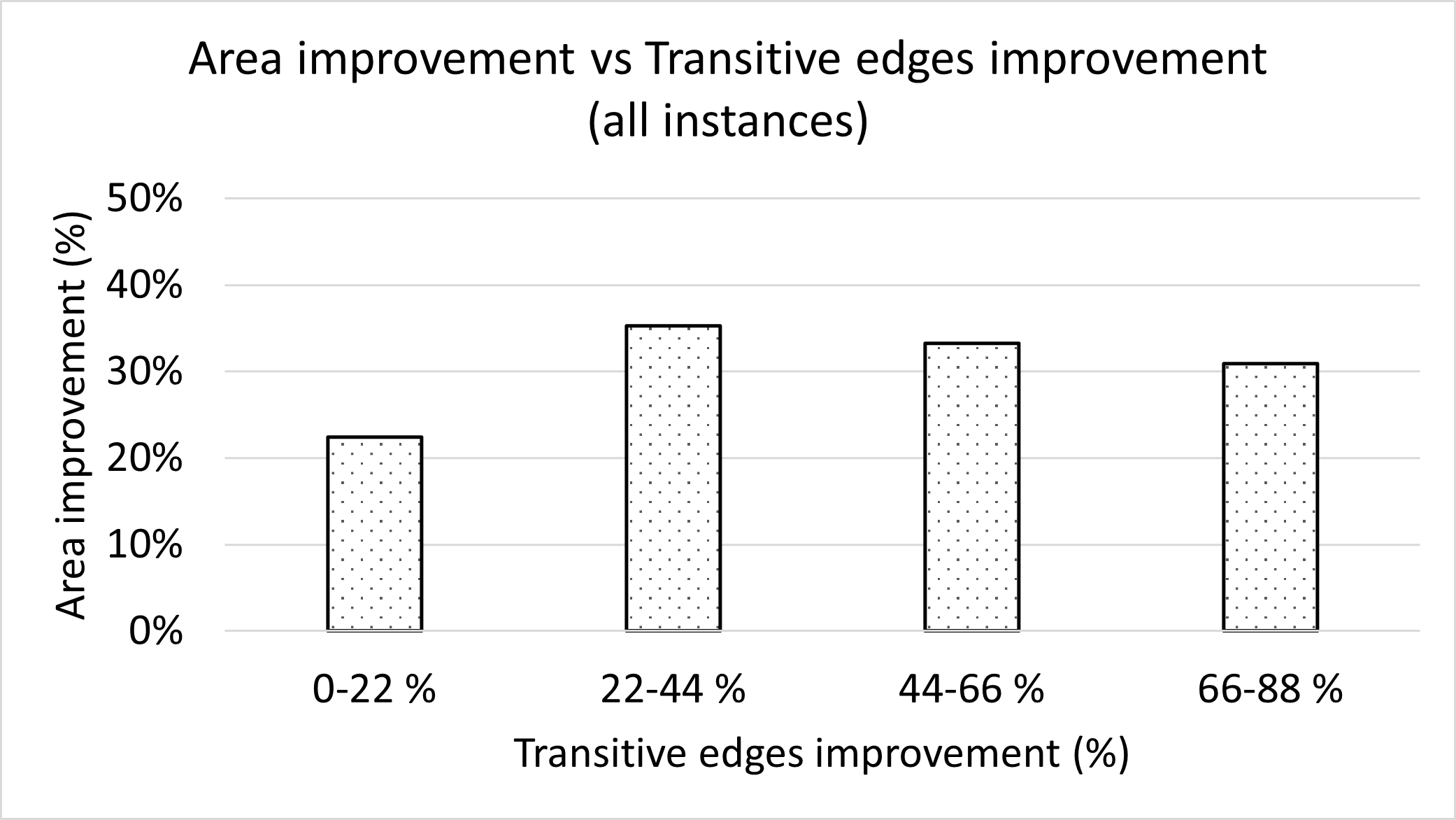}\label{fi:Corr_TrE_Area}}
	}
	\subfigure[]{
		{\includegraphics[width=0.475\textwidth]{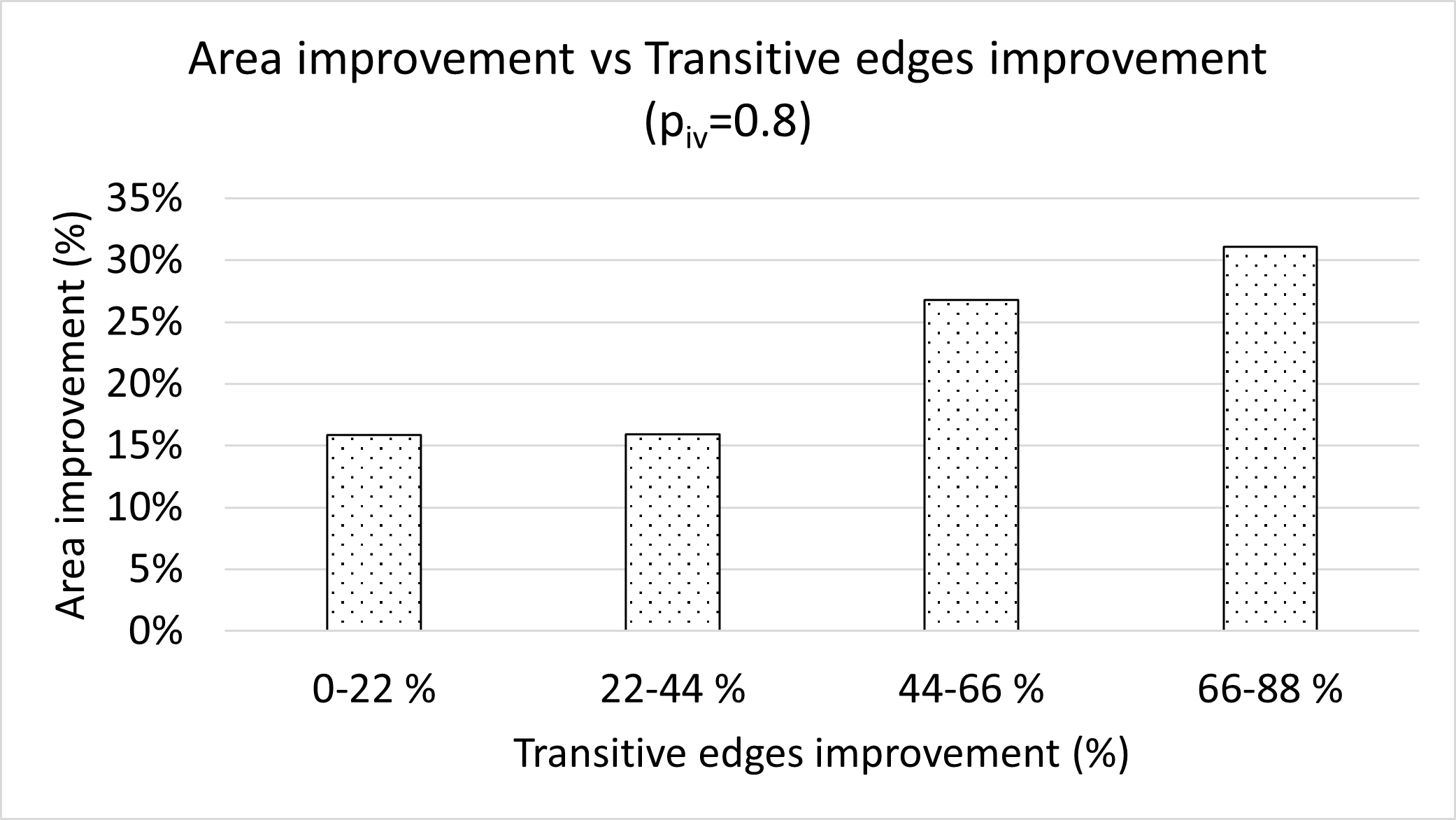}\label{fi:Corr_TrE_Area_0-8}}
	}\hfill
	\subfigure[]{
		{\includegraphics[width=0.475\textwidth]{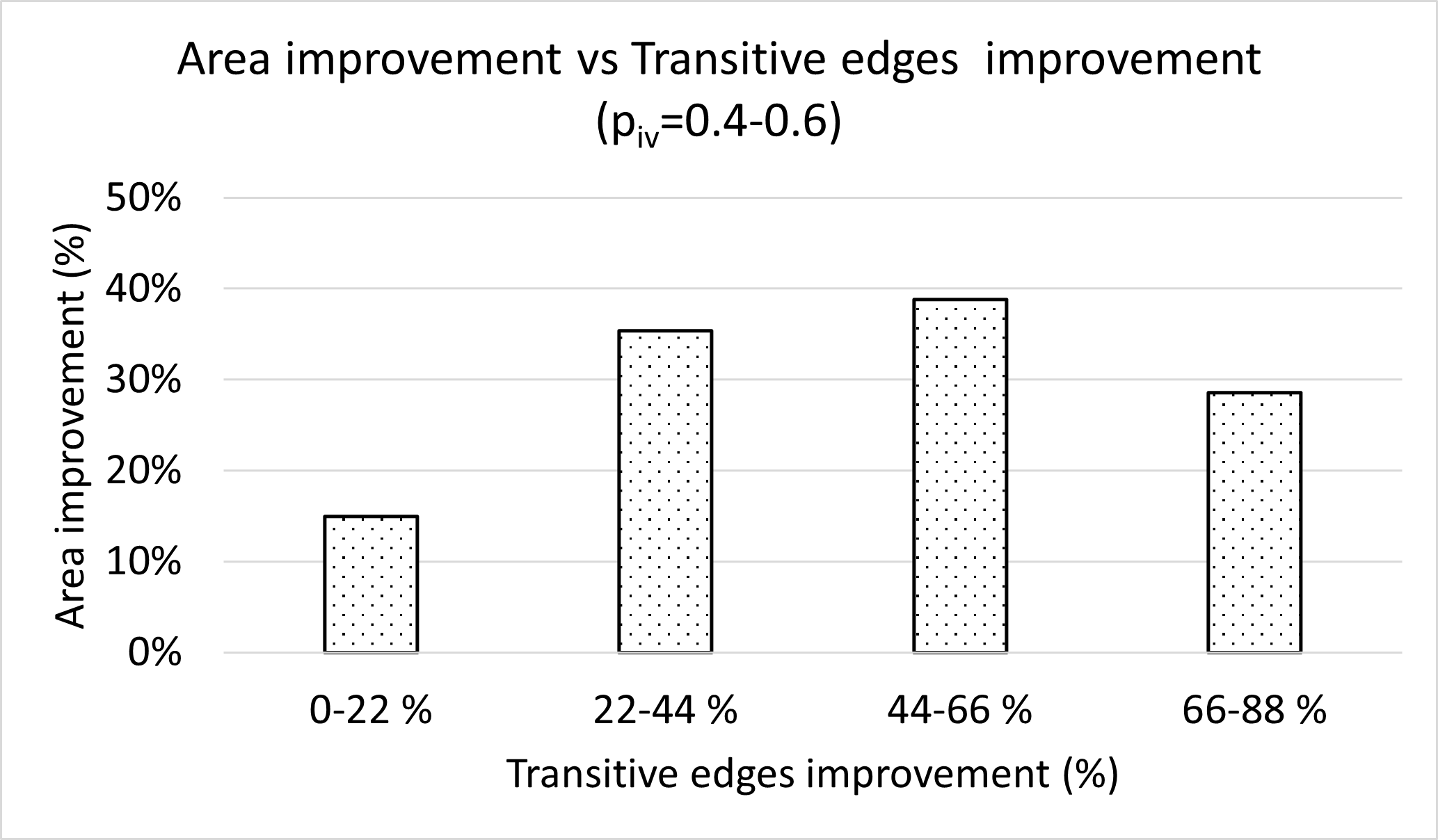}\label{fi:Corr_TrE_Area_0-6_0-4}}
	}\hfill
	\subfigure[]{
		{\includegraphics[width=0.475\textwidth]{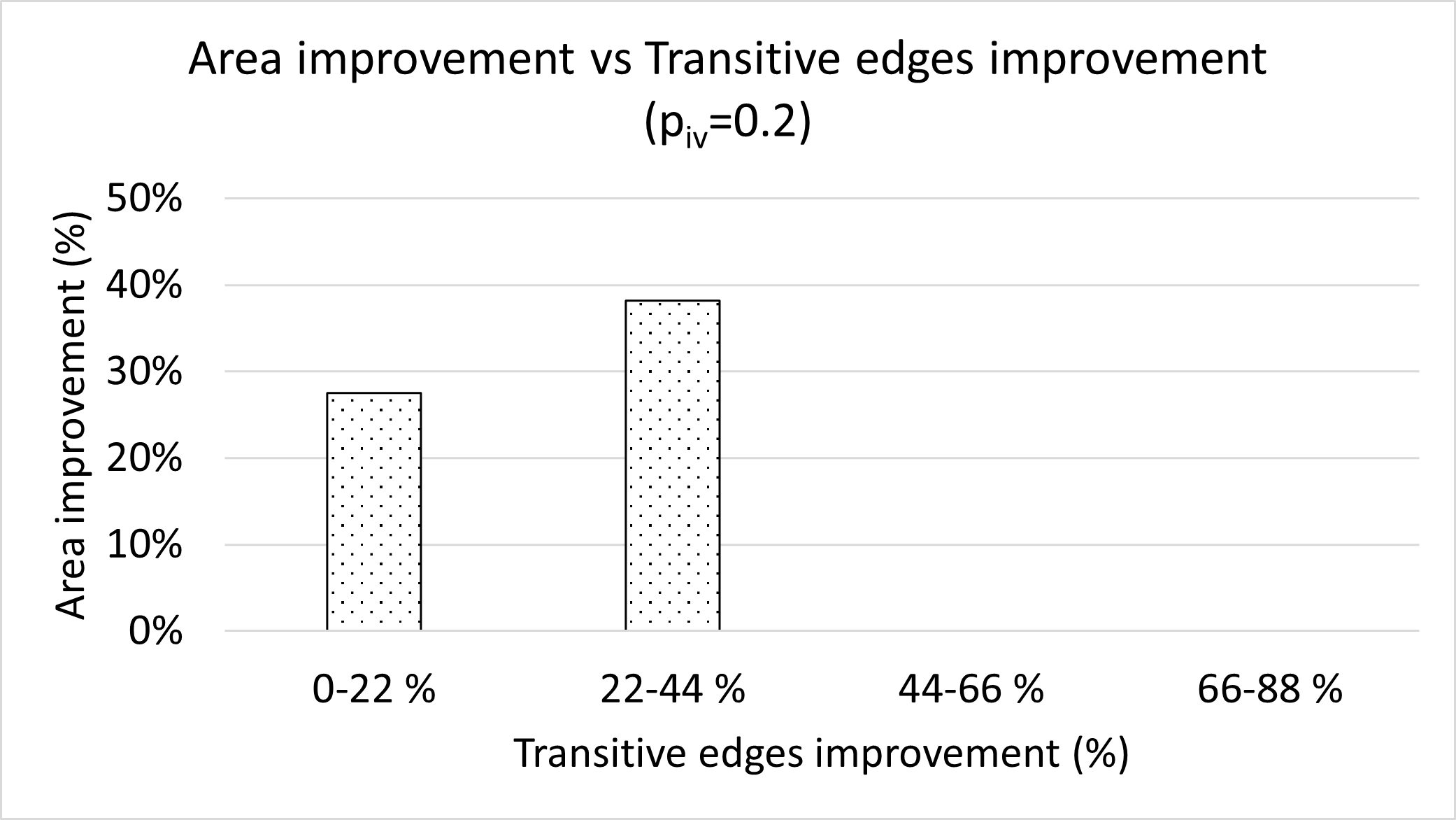}\label{fi:Corr_TrE_Area_0-2}}
	}\caption{Correlation between the improvement (reduction) in terms of drawing area and in terms of transitive edges improvement.}\label{fi:correlation}
\end{figure}

About (G2), \cref{fi:impTrE} shows the reduction (in percentage) of the number of transitive edges in the solutions of \textsc{OptST} with respect to the solutions of \textsc{HeurST}. More precisely, \cref{fi:impTrE_all} reports values averaged over all instances with the same number of vertices; \cref{fi:impTrE_0-8}, \cref{fi:impTrE_0-6_0-4}, and \cref{fi:impTrE_0-2} report the same data, partitioning the instances by different values of $p_{\rm iv}$, namely $0.8$ (the sparsest instances), $0.4$-$0.6$ (instances of medium density), and $0.2$ (the densest instances). 
For each instance, denoted by ${\rm trOpt}$ and ${\rm trHeur}$ the number of transitive edges of the solutions computed by \textsc{OptST} and \textsc{HeurST}, respectively, the reduction percentage equals the value $\Big(\frac{{\rm trHeur} - {\rm trOpt}}{\max\{1,{\rm trHeur}\}} \times 100 \Big)$. Over all instances, the average reduction is about $35\%$; it grows above $60\%$ on the larger graphs if we restrict to the sparsest instances (with improvements larger than $80\%$ on some graphs), while it is below $30\%$ for the densest instances, due to the presence of many 3-cycles, for which a transitive edge cannot be avoided.

About~(G3), \cref{fi:impInstances} shows the percentage of instances for which \textsc{DrawOptST} produces drawings that are better than those produced by \textsc{DrawHeurST} in terms of area requirement (the label ``better'' of the legend). It can be seen that \textsc{DrawOptST} computes more compact drawings for the majority of the instances. In particular, it is interesting to observe that this is most often the case even for the densest instances (i.e., those for $p_{\rm iv}=0.2$), for which we have previously seen that the average reduction of transitive edges is less evident. 
For those instances for which \textsc{DrawOptST} computes more compact drawings than \textsc{DrawHeurST}, \cref{fi:impArea} reports the average percentage of improvement in terms of area requirement (i.e., the percentage of area reduction). The values are mostly between $30\%$ and $50\%$. To complement this data, \cref{fi:correlation} reports the trend of the improvement (reduction) in terms of drawing area with respect to the reduction of the transitive edges (discretized in four intervals). For the instances with $p_{\rm iv}=0.8$ and $p_{\rm iv}=0.2$, the correlation between these two measures is quite evident. For the instances of medium density ($p_{\rm iv} \in \{0.4, 0.5, 0.6\}$), the highest values of improvement in terms of area requirement are observed for reductions of transitive edges between $22 \%$ and $66 \%$. Figures~\ref{fi:ug_8_100_08_polyline} and ~\ref{fi:ug_6_100_05_polyline} in the appendix show drawings computed by \textsc{DrawHeurST} and \textsc{DrawOptST} for two of our~instances. 



\section{Final Remarks and Open Problems}\label{se:conclusions}

We addressed the problem of computing $st$-orientations with the minimum number of transitive edges. This problem has practical applications in graph drawing, as finding an $st$-orientation is at the heart of several graph drawing algorithms. Although $st$-orientations without transitive edges have been studied from a combinatorial perspective~\cite{DBLP:journals/corr/abs-2105-06955}, there is a lack of practical algorithms, and the complexity of deciding whether a graph can be oriented to become an $st$-graph without transitive edges seems not to have been previously addressed. 

We proved that this problem is NP-hard in general and we described an ILP model for planar graphs based on characterizing planar $st$-graphs without transitive edges in terms of a constrained labeling of the vertex angles inside its faces.
An extensive experimental analysis on a large set of instances shows that our model is fast in practice, taking few seconds for graphs of thousand vertices. It saves on average $35\%$ of transitive edges w.r.t. a classical algorithm that computes an unconstrained $st$-orientation. We also showed that for classical layout algorithms that compute polyline drawings of planar graphs through an $st$-orientation, minimizing the number of transitive edges yields more compact drawings most of the time (see also \cref{fi:ug_8_100_08_polyline} and \cref{fi:ug_6_100_05_polyline} in the appendix).

\noindent We suggest two future research directions:
$(i)$ It remains open to establish the time complexity of the problem for planar graphs. Are there polynomial-time algorithms that compute $st$-orientations with the minimum number of transitive edges for all planar graphs or for specific subfamilies of planar graphs?
%
%
$(ii)$ One can extend the experimental analysis to real-world graphs and design fast heuristics, which can be compared to the optimal algorithm. 


%
%

\clearpage

\bibliographystyle{splncs04}
\bibliography{bibliography}

\newpage
\appendix

\section{Appendix}\label{se:app}

\subsection{Additional Material for \Cref{se:hardness}}\label{sse:app-hardness}

\begin{restatable}{observation}{obInducedPath}\label{ob:induced-path}
Let $(v_1,v_2,\dots,v_k)$ be a path of $G$ such that its internal vertices $v_2, v_3, \dots, v_{k-1}$ have degree $2$ in $G$ and are different from $s$ and $t$. In any non-transitive $st$-orientation of $G$ the edges $(v_i,v_{i+1})$, with $i=1, \dots, k-1$, are all directed from $v_i$ to $v_{i+1}$ or they are all directed from $v_{i+1}$ to $v_i$.     
\end{restatable}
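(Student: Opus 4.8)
The plan is to observe that the statement follows from a basic degree property of $st$-graphs, and in fact does not even use non-transitivity. First I would recall that in any $st$-graph (acyclic, single source $s$, single sink $t$) every vertex $v \notin \{s,t\}$ has in-degree at least one and out-degree at least one: a vertex with no incoming edge would be a source and a vertex with no outgoing edge would be a sink, so by the uniqueness of $s$ and $t$ such a vertex must coincide with $s$ or $t$, a contradiction.

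Next I would apply this to the internal vertices of the path. Each $v_j$ with $2 \le j \le k-1$ has degree exactly $2$ in $G$ and is different from $s$ and $t$, hence its two incident edges $(v_{j-1},v_j)$ and $(v_j,v_{j+1})$ must be one incoming and one outgoing. Equivalently, the orientation ``passes through'' $v_j$: it can be neither a local source of the path ($v_{j-1} \leftarrow v_j \rightarrow v_{j+1}$) nor a local sink ($v_{j-1} \rightarrow v_j \leftarrow v_{j+1}$).

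Then I would propagate this along the path by a short induction. Fix an $st$-orientation and consider the direction of the first edge $(v_1,v_2)$. If $k = 2$ there is a single edge and there is nothing to prove. If $(v_1,v_2)$ is oriented from $v_1$ to $v_2$ and $k > 2$, then $v_2$ is internal and its incoming edge is already $(v_1,v_2)$, so $(v_2,v_3)$ is forced to be oriented from $v_2$ to $v_3$; iterating this argument for $v_3, v_4, \dots, v_{k-1}$ shows that every edge $(v_i,v_{i+1})$ is oriented from $v_i$ to $v_{i+1}$. The case where $(v_1,v_2)$ is oriented from $v_2$ to $v_1$ is symmetric and yields the orientation from $v_{i+1}$ to $v_i$ for all $i$.

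There is no real obstacle here; the only points requiring a little care are the boundary case $k = 2$ and the remark that $v_1$ and $v_k$ are allowed to have degree larger than $2$, so no chord can ever be incident to an internal vertex and interfere with the above forcing. I would also explicitly note that only the single-source/single-sink property of $st$-graphs is used, so the conclusion holds for every $st$-orientation, not merely the non-transitive ones; this slightly stronger statement is convenient when applying it inside the fork and variable gadgets.
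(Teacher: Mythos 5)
Your proof is correct and takes essentially the same approach as the paper: an inconsistent orientation would create an internal vertex of the path that is a source or sink different from $s$ and $t$, which is impossible in an $st$-orientation. Your explicit induction and the remark that non-transitivity is never used are just slightly more detailed renderings of the paper's one-line argument.
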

\begin{proof}
The statement can be easily proved by observing that if two edges of the path have an inconsistent orientation (as in \cref{fi:observations-c}) then the path would contain an internal vertex that is a source or a sink different from $s$ and $t$, contradicting the hypothesis that the orientation is an $st$-orientation.
\end{proof}

\begin{figure}[h!]
	\centering
	\subfigure[]{
		\includegraphics[width=0.21\textwidth, page=1]{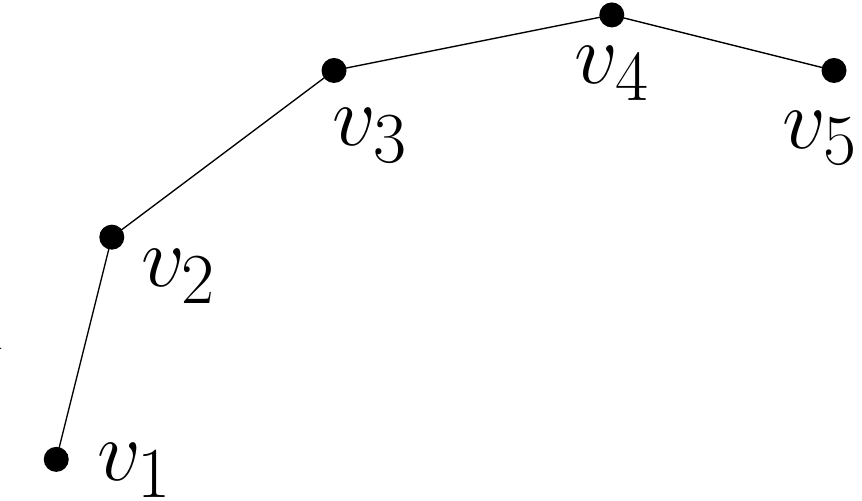}
		\label{fi:observations-a}
	}
	\hfil
	\subfigure[]{
		\includegraphics[width=0.21\textwidth, page=2]{observations}
		\label{fi:observations-b}
	}
	\hfil
	\subfigure[]{
		\includegraphics[width=0.21\textwidth, page=3]{observations}
		\label{fi:observations-c}
	}
	\hfil
	\subfigure[]{
		\includegraphics[width=0.21\textwidth, page=4]{observations}
		\label{fi:observations-d}
	}
	\caption{(a) A path of $G$ with all internal vertices of degree two. (b) A consistent orientation of the path. (c) An inconsistent orientation of the path generates sinks or sources. (d) A directed path of $G$ and a chord.}\label{fi:observations} 
\end{figure}

\begin{restatable}{observation}{obChord}\label{ob:chord}
Let $(v_1,v_2,\dots,v_k)$ be a path of $G$ and let $(v_1,v_k)$ be an edge of $G$. In any non-transitive $st$-orientation of $G$ the edges $(v_i,v_{i+1})$, with $i=1, \dots, k-1$, cannot be all directed from $v_i$ to $v_{i+1}$.     
\end{restatable}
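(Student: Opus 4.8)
The plan is a short proof by contradiction that relies on two defining features of a non-transitive $st$-orientation: it is acyclic, and it has no transitive edge. First I would observe that the statement is only meaningful when $k \geq 3$, since then the edge $(v_1,v_k)$ is genuinely distinct from every path edge $(v_i,v_{i+1})$ (as the path is simple, $v_k \neq v_2$ and $v_1 \neq v_{k-1}$); for $k = 2$ there is nothing to prove.

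Assume, for contradiction, that in some non-transitive $st$-orientation of $G$ every edge $(v_i,v_{i+1})$, $i = 1,\dots,k-1$, is directed from $v_i$ to $v_{i+1}$. Then $v_1 \to v_2 \to \cdots \to v_k$ is a directed path that avoids the edge $(v_1,v_k)$. Now split on the orientation of $(v_1,v_k)$. If it is directed from $v_1$ to $v_k$, then the path $v_1 \to v_2 \to \cdots \to v_k$ witnesses that $(v_1,v_k)$ is transitive, contradicting non-transitivity. If it is directed from $v_k$ to $v_1$, then $v_1 \to v_2 \to \cdots \to v_k \to v_1$ is a directed cycle, contradicting the acyclicity of any $st$-orientation. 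Either way we obtain a contradiction, which establishes the claim.

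I expect no real obstacle: the argument is a two-case dichotomy on the chord's orientation, with acyclicity excluding one case and the absence of transitive edges excluding the other. The only point worth flagging is the mild hypothesis $k\ge 3$, needed so that the chord is not one of the path edges. A mirror-image argument — relabelling the path in reverse — would also show that the path edges cannot all point from $v_{i+1}$ to $v_i$, but that sharper statement is not needed in what follows, so I would not include it.
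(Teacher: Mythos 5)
Your argument is correct and coincides with the paper's own proof: both proceed by contradiction with the same two-case split on the orientation of the chord, using non-transitivity to exclude one case and acyclicity to exclude the other. The extra remark about $k\ge 3$ is a harmless clarification not present in the paper.
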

\begin{proof}
Suppose for a contradiction that there exists a non-transitive $st$-orientation of $G$ such that each edge $(v_i,v_{i+1})$, with $i=1, \dots, k-1$, is directed from $v_i$ to $v_{i+1}$ (refer to \cref{fi:observations-d}). If edge $(v_1,v_k)$ was also directed from $v_1$ to $v_k$ it would be a transitive edge, contradicting the hypothesis that the orientation is non-transitive. Otherwise, if $(v_1,v_k)$ was directed from $v_k$ to $v_1$ it would form a directed cycle, contradicting the hypothesis that the orientation is an $st$-orientation.      
\end{proof}

\subsubsection{Proof of \Cref{le:fork-gadget}}

\leForkGadget*
\begin{proof}
Suppose edge $e_1$ is oriented entering $F$ (refer to \cref{fi:gadget-b}). One between $e_9$ or $e_{10}$ must be oriented exiting $F$, otherwise $F$ contains a sink contradicting the fact that we have an $st$-orientation of $G$. Since gadget $F$ is symmetric, we may assume without loss of generality that edge $e_9$ is oriented exiting $F$. Therefore, there must be at least one directed path from $e_1$ to $e_9$ traversing $F$. There are three possible such directed paths: (1) path $(e_1,e_4,e_8,e_7,e_6,e_9)$; (2) path $(e_1,e_3,e_6,e_9)$; and (3) path $(e_1,e_2,e_5,e_9)$. 
Suppose Case (1) applies, i.e., $(e_1,e_4,e_8,e_7,e_6,e_9)$ is a directed path. We have a contradiction because of \cref{ob:chord} applied to the directed path $(e_4,e_8,e_7)$ and the chord $e_3$. Suppose Case (2) applies, i.e., $(e_1,e_3,e_6,e_9)$ is a directed path. Note that by \cref{ob:induced-path} the edges $e_2$ and $e_5$ must be both directed in the same direction. If they were directed towards $v$, then we would have a directed cycle $(e_3,e_6,e_5,e_2)$. Hence, $(e_2,e_5)$ are directed away from $v$ and, since $(e_1,e_2,e_5,e_9)$ is also a directed path, Case (2) implies Case (3). Conversely, suppose Case (3) applies, i.e., $(e_1,e_2,e_5,e_9)$ is a directed path. Edge $e_6$ must be directed towards $w$. In fact, if $e_6$ was directed away from $w$ we would have a contradicton by \cref{ob:chord} applied to the directed path $(e_2,e_5,e_6)$ and the chord $e_3$. Also, edge $e_3$ must be directed away from $v$. In fact, if $e_3$ was directed towards $v$ edge $e_6$ would be a transitive edge with respect to the directed path $(e_3,e_2,e_5)$. It follows that $(e_1,e_3,e_6,e_9)$ would also be a directed path and Case (3) implies Case (2). Therefore, we have to assume that Case (2) and Case (3) both apply. Note that by \cref{ob:induced-path} the edges $e_4$ and $e_8$ must be both directed in the same direction. If the path $(e_8,e_4)$ was oriented exiting $z$ and entering $v$ then we would have a contradiction because of \cref{ob:chord} applied to the directed path $(e_8,e_4,e_3)$ and the chord $e_7$. It follows that the path $(e_4,e_8)$ is oriented exiting $v$ and entering $z$. Now, edge $e_7$ must be oriented entering $z$, otherwise $e_3$ would be a transitive edge with respect to the path $(e_4,e_8,e_7)$. Finally, edge $e_{10}$ must be oriented exiting $z$, otherwise $z$ would be a sink. 
In conclusion, if $e_1$ is oriented entering $F$, then $e_9$ and $e_{10}$ must be oriented exiting $F$. 

With analogous and symmetric arguments it can be proved that if $e_1$ is oriented exiting $F$ (refer to \cref{fi:gadget-c}), then $e_9$ and $e_{10}$ must be oriented entering $F$. Since $e_1$ must be oriented in one way or the other, the only two possible orientations of $F$ are those depicted in \cref{fi:gadget-b,fi:gadget-c} and the statement follows. 
\end{proof}

\subsubsection{Proof of \Cref{le:variable-gadget}}


\leVariableGadget*
\begin{proof}
Suppose edge $e_1$ of $F_x$ is oriented entering $F_x$ (see \cref{fi:variable-gadget-a}). By \cref{le:fork-gadget} edge $x$ is oriented exiting $F_x$ and, hence, exiting $V_x$. Also edge $e_9$ of $F_x$, which coincides with $e_{10}$ of $F_{\overline{x}}$, is oriented exiting $F_x$ and entering $F_{\overline{x}}$. Now, always by \cref{le:fork-gadget}, edge $e_1$ of $F_{\overline{x}}$ is oriented exiting $F_{\overline{x}}$ and edge $e_9$ of $F_{\overline{x}}$, which coincides with edge $\overline{x}$ of $V_x$, is oriented entering $F_{\overline{x}}$ and, hence, entering $V_x$. 

Suppose now that edge $e_1$ of $F_x$ is oriented exiting $F_x$ (see \cref{fi:variable-gadget-b}). By \cref{le:fork-gadget} edge $x$ is oriented entering $F_x$ and, hence, entering $V_x$. Also edge $e_9$ of $F_x$, which coincides with $e_{10}$ of $F_{\overline{x}}$, is oriented entering $F_x$ and exiting $F_{\overline{x}}$. Now, always by \cref{le:fork-gadget}, edge $e_1$ of $F_{\overline{x}}$ is oriented entering $F_{\overline{x}}$ and edge $e_9$ of $F_{\overline{x}}$, which coincides with edge $\overline{x}$ of $V_x$, is oriented exiting $F_{\overline{x}}$ and, hence, exiting $V_x$. 
Finally, observe that, even if a directed path was added outside $V_x$ from edge $x$ to edge $\overline{x}$ or vice versa, no directed cycle traverses $V_x$. In fact, all directed paths exiting $V_x$ originate from $s$ and all directed paths entering $V_x$ go to $t$.   
\end{proof}

\subsubsection{Proof of \Cref{th:hardness}}

\thHardness*
\begin{proof}
The reduction from an instance $\varphi$ of \textsc{NAE3SAT} to an instance~$I_\varphi$ previously described is performed in time linear in the size of $\varphi$.

Suppose $I_\varphi = \langle G, s, t \rangle$ is a positive instance of \textsc{NTO} and consider any non-transitive $st$-orientation of $G_\varphi$. Consider a clause $c$ of $\varphi$ and the corresponding vertex $v_c$ in $G$. Since vertex $v_c$ is not a sink nor a source it must have at least one entering edge $e_\textrm{in}$ and at least one exiting edge $e_\textrm{out}$. Consider first edge $e_\textrm{in}$ and assume it corresponds to a directed literal $x_i$ of $c$ (to a negated literal $\overline{x}_i$ of $c$, respectively). By construction, edge $e_\textrm{in}$ comes from the edge $x_i$ (edge $\overline{x}_i$, respectively) of variable gadget $V_{x_i}$ or from an intermediate split gadget $S_{x_i}$ ($S_{\overline{x}_i}$, respectively) that has edge $x_i$ (edge $\overline{x}_i$, respectively) as input edge. Therefore, by \cref{le:variable-gadget,le:split-gadget} edge $x$ (edge $\overline{x}_i$, respectively) of $V_{x_i}$ is oriented exiting $V_{x_i}$, which corresponds to a \texttt{true} literal of $c$.
Now consider edge $e_\textrm{out}$ and assume it corresponds to a directed literal $x_j$ of $c$ (to a negated literal $\overline{x}_j$ of $c$, respectively). With analogous arguments as above you conclude that edge $x_j$ (edge $\overline{x}_j$, respectively) of $V_{x_j}$ is oriented entering $V_{x_j}$, which corresponds to a \texttt{false} literal of $c$. Therefore, each clause $c$ has both a \texttt{true} and a \texttt{false} literal and the \textsc{NAE3SAT} instance $\varphi$ is a yes instance. 

Conversely, suppose that instance $\varphi$ is a yes instance of \textsc{NAE3SAT}. Consider a truth assignment to the variables in $X$ that satisfies $\varphi$. Orient the edges of each variable gadget $V_x$ as depicted in \cref{fi:variable-gadget-a} or \cref{fi:variable-gadget-b} depending on whether variable $x$ is set to \texttt{true} or \texttt{false} in the truth assignment, respectively.
Orient each split gadget according to its input edge. Since the truth assignment is such that every clause has a \texttt{true} literal and a \texttt{false} literal, the corresponding clause gadget $C_c$  will have at least one incoming edge and one outgoing edge. Therefore the obtained orientation is a non-transitive $st$-orientation of~$G$.
Regarding acyclicity, observe that variable gadgets and clause gadgets whose edges are oriented as depicted in \cref{fi:variable-gadget} and \cref{fi:clause-gadget}, respectively, are acyclic. Also, a split gadget whose output edges are oriented all exiting or all entering the gadget is acyclic. Since all the directed paths that enter a variable gadget $V_{x_i}$ terminate at $t$ without exiting $V_{x_i}$ and all the directed paths that leave $V_{x_i}$ come from $s$ without entering $V_{x_i}$, there cannot be a directed cycle involving a variable gadget $V_{x_i}$. It remains to show that there are no directed cycles involving split gadgets and clause gadgets. However, by \cref{le:split-gadget} no directed path may enter a split gadget from a clause gadget and exit the split gadget towards a second clause gadget. Hence, directed cycles involving clause gadgets and split gadgets alone cannot exist.  

Finally, \textsc{NTO} is trivially in NP, as one can non-deterministically explore all possible orientations of the~graph.  
\end{proof}

\subsubsection{Complexity of \textsc{NTO} where $s$ and $t$ can be freely chosen.}

Observe that the variant of the \textsc{NTO} problem where the source and the target vertices of $G$ are not prescribed but can be freely choosen is also NP-hard. Problem \textsc{NTO}, in fact, can be easily reduced to it. Consider an instance $\langle G^*, s^*, t^* \rangle$ of \textsc{NTO}. Add two vertices $s^+$ and $t^+$ to $G^*$ and connect them to $s^*$ and to $t^*$, respectively. Call $G^+$ the obtained graph. Since $s^+$ and $t^+$ have degree one in $G^+$, in any non-transitive $st$-orientation of $G^+$ they can only be sources or sinks, where if one of them is the source the other one is the sink. Hence, given any non-transitive $st$-orientation of $G^+$ you can immediately find a non-transitive $s^*t^*$-orientation of $G^*$, possibly by reversing all edge orientations if $t^+$ is the source and $s^+$ is the sink. Conversely, given a non-transitive $s^*t^*$-orientation of $G^*$ you easily find an $st$-orientation of $G$ orienting the edge $(s^+,s^*)$ from $s^+$ to $s^*$ and the edge $(t^*,t^+)$ from $t^*$ to $t^+$.
Therefore, the addition of edges $(s^+,s^*)$ and $(t^+,t^*)$ is a polynomial-time reduction from problem \textsc{NTO} with prescribed source and target to the variant of the \textsc{NTO} problem where these vertices can be freely choosen, proving the hardness of the latter problem. Since this variant of \textsc{NTO} is also trivially in NP it is NP-complete.

\subsection{Additional Material for \Cref{se:experiments}}

\begin{figure}[htb]
	\centering
	\subfigure[14 transitive edges]{
		\includegraphics[width=0.4\textwidth]{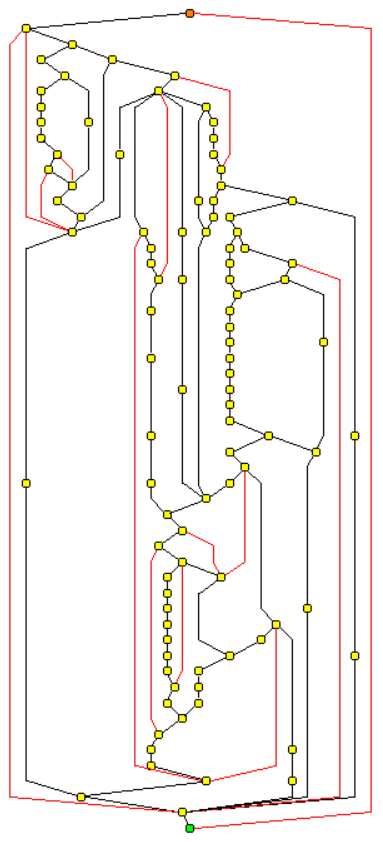}
		\label{fi:ug_8_100_08_heur}
	}
	\hfil
	\subfigure[7 transitive edges]{
		\includegraphics[width=0.4\textwidth]{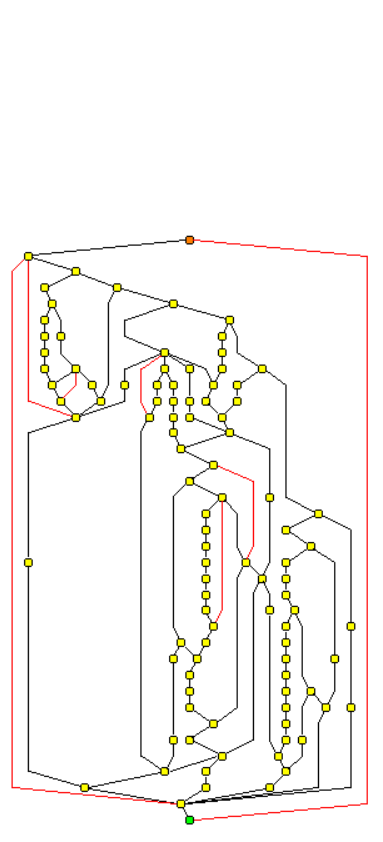}
		\label{fi:ug_8_100_08_opt}
	}
	\caption{Two polyline drawings of the same plane graph with $100$ vertices and $\rm p_{iv}=0.8$ computed by (a) \textsc{DrawHeurST} and (b) \textsc{DrawOptST}. Transitive edges are colored red. }\label{fi:ug_8_100_08_polyline}
\end{figure}

\begin{figure}[htb]
	\centering
	\subfigure[52 transitive edges]{
		\includegraphics[width=0.8\textwidth]{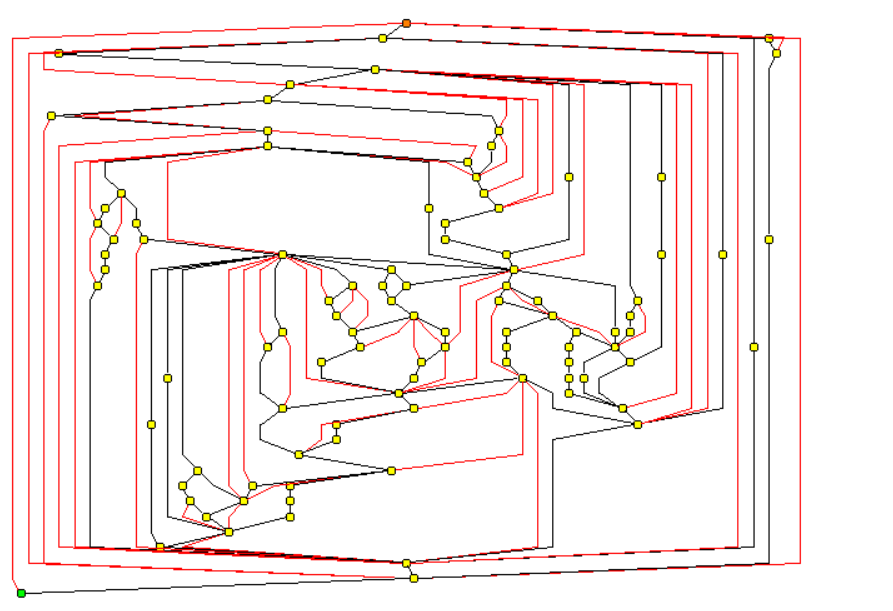}
		\label{fi:ug_6_100_05_heur}
	}
	\hfil
	\subfigure[37 transitive edges]{
		\includegraphics[width=0.8\textwidth]{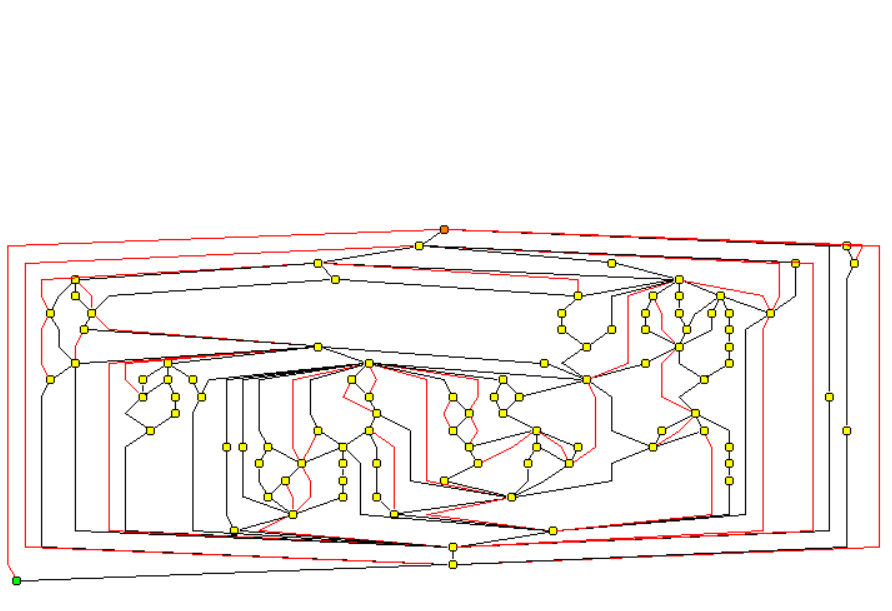}
		\label{fi:ug_6_100_05_opt}
	}
	\caption{Two polyline drawings of the same plane graph with $100$ vertices and $\rm p_{iv}=0.5$ computed by (a) \textsc{DrawHeurST} and (b) \textsc{DrawOptST}. Transitive edges are colored red. }\label{fi:ug_6_100_05_polyline}
\end{figure}

\begin{table}[t]
	\footnotesize
	\begin{center}
		\rotatebox{90}{
			\begin{tabular}{|c|c|c|c|c||c|c|c|c||c|c|c|c||c|c|c|c||c|c|c|c|}
				\cline{2-21}
				\multicolumn{1}{c|}{} &
				\multicolumn{4}{|c||}{0.8} & \multicolumn{4}{|c||}{0.6} & \multicolumn{4}{|c||}{0.5} & \multicolumn{4}{|c||}{0.4} & \multicolumn{4}{|c|}{0.2}\\    
				\hline
				$n$ & AVG & MIN &  MAX & SD & AVG & MIN &  MAX & SD & AVG & MIN &  MAX & SD & AVG & MIN &  MAX & SD & AVG & MIN &  MAX & SD \\
				
				\hline
				{10} & 1.16 & 1.00 &  1.40 & 0.11 & 1.33 & 1.10 &  1.50 & 0.11 & 1.50 & 1.20 &  1.80 & 0.22 & 1.71 & 1.50 &  2.00 & 0.14 & 1.89 & 1.40 &  2.20 & 0.26   \\
				\hline
				{20} & 1.19 & 1.05 &  1.30 & 0.08 & 1.54 & 1.30 &  2.15 & 0.25 & 1.65 & 1.35 &  2.05 & 0.20 & 1.76 & 1.60 &  2.05 & 0.15 & 2.41 & 2.25 &  2.55 & 0.11   \\
				\hline
				{30} & 1.23 & 1.07 & 1.37 & 0.10 & 1.49 & 1.37 &  1.67 & 0.10 & 1.68 & 1.43 &  1.93 & 0.16 & 1.93 & 1.83 &  2.07 & 0.08 & 2.42 & 2.23 &  2.57 & 0.11   \\
				\hline
				{40} & 1.22 & 1.10 &  1.30 & 0.06 & 1.58 & 1.43 &  1.78 & 0.11 & 1.83 & 1.58 &  2.08 & 0.14 & 1.97 & 1.70 &  2.23 & 0.20 & 2.49 & 2.43 &  2.58 & 0.05   \\
				\hline
				{50} & 1.22 & 1.16 &  1.28 & 0.04 & 1.57 & 1.46 &  1.66 & 0.06 & 1.74 & 1.54 &  1.86 & 0.09 & 2.02 & 1.80 &  2.30 & 0.14 & 2.54 & 2.40 &  2.68 & 0.09   \\
				\hline
				{60} & 1.24 & 1.15 &  1.33 & 0.06 & 1.51 & 1.38 &  1.63 & 0.09 & 1.77 & 1.55 &  1.95 & 0.13 & 2.00 & 1.83 &  2.25 & 0.13 & 2.54 & 2.43 &  2.67 & 0.07   \\
				\hline
				{70} & 1.22 & 1.16 &  1.36 & 0.06 & 1.57 & 1.41 &  1.71 & 0.10 & 1.84 & 1.66 &  1.93 & 0.08 & 2.04 & 1.89 &  2.20 & 0.11 & 2.55 & 2.41 &  2.70 & 0.09   \\
				\hline
				{80} & 1.25 & 1.19 &  1.33 & 0.05 & 1.57 & 1.49 &  1.68 & 0.06 & 1.71 & 1.63 &  1.79 & 0.05 & 2.03 & 1.79 &  2.18 & 0.14 & 2.54 & 2.44 &  2.65 & 0.07   \\
				\hline
				{90} & 1.24 & 1.16 &  1.33 & 0.06 & 1.54 & 1.40 &  1.71 & 0.10 & 1.80 & 1.67 &  1.96 & 0.11 & 2.05 & 1.93 &  2.17 & 0.08 & 2.59 & 2.42 &  2.76 & 0.10   \\
				\hline
				{100} & 1.25 & 1.15 &  1.34 & 0.05 & 1.53 & 1.40 &  1.67 & 0.09 & 1.80 & 1.69 &  1.97 & 0.09 & 2.06 & 1.90 &  2.20 & 0.09 & 2.60 & 2.54 &  2.70 & 0.05   \\
				\hline
				{200} & 1.25 & 1.20 &  1.28 & 0.03 & 1.57 & 1.50 &  1.65 & 0.06 & 1.78 & 1.69 &  1.84 & 0.05 & 2.03 & 1.92 &  2.10 & 0.05 & 2.58 & 2.53 &  2.65 & 0.04  \\
				\hline
				{300} & 1.25 & 1.19 &  1.30 & 0.03 & 1.59 & 1.48 &  1.67 & 0.07 & 1.82 & 1.73 &  1.93 & 0.07 & 2.08 & 2.02 &  2.15 & 0.05 & 2.63 & 2.58 &  2.68 & 0.03 \\
				\hline
				{400} & 1.25 & 1.19 &  1.31 & 0.03 & 1.59 & 1.53 &  1.64 & 0.04 & 1.80 & 1.74 &  1.86 & 0.04 & 2.10 & 2.04 &  2.15 & 0.03 & 2.63 & 2.55 &  2.66 & 0.03   \\
				\hline
				{500} & 1.25 & 1.21 &  1.27 & 0.03 & 1.59 & 1.53 &  1.62 & 0.03 & 1.82 & 1.75 &  1.89 & 0.05 & 2.08 & 2.02 &  2.16 & 0.05 & 2.62 & 2.59 &  2.68 & 0.03   \\
				\hline
				{600} & 1.25 & 1.21 &  1.29 & 0.02 & 1.59 & 1.54 &  1.64 & 0.04 & 1.80 & 1.73 &  1.88 & 0.05 & 2.07 & 2.02 &  2.11 & 0.02 & 2.63 & 2.61 &  2.65 & 0.01   \\
				\hline
				{700} & 1.24 & 1.21 &  1.27 & 0.02 & 1.57 & 1.55 &  1.59 & 0.01 & 1.79 & 1.71 &  1.84 & 0.04 & 2.08 & 2.04 &  2.11 & 0.02 & 2.63 & 2.60 &  2.66 & 0.02   \\
				\hline
				{800} & 1.24 & 1.23 &  1.26 & 0.01 & 1.59 & 1.55 &  1.62 & 0.02 & 1.80 & 1.73 &  1.88 & 0.05 & 2.09 & 2.05 &  2.14 & 0.03 & 2.62 & 2.59 &  2.67 & 0.03   \\
				\hline
				{900} & 1.25 & 1.22 &  1.28 & 0.02 & 1.59 & 1.54 &  1.66 & 0.04 & 1.80 & 1.75 &  1.86 & 0.04 & 2.08 & 2.02 &  2.17 & 0.04 & 2.63 & 2.60 &  2.66 & 0.02   \\
				\hline
				{1000} & 1.24 & 1.23 & 1.26 & 0.01 & 1.59 & 1.56 &  1.63 & 0.03 & 1.80 & 1.77 &  1.85 & 0.03 & 2.08 & 2.05 &  2.12 & 0.02 & 2.63 & 2.61 &  2.64 & 0.01   \\
				\hline
		\end{tabular}}
	\end{center}
	\caption{Density of the different instances of our graph benchmark.}\label{ta:density}
\end{table}



\end{document}